\theoremstyle{plain}
\newtheorem{theorem}{Theorem}[section]
\newtheorem{fact}[theorem]{Fact}
\newtheorem{lemma}[theorem]{Lemma}
\newtheorem{corollary}[theorem]{Corollary}
\theoremstyle{definition}
\theoremstyle{remark}
\title{Learning-Augmented Binary Search Trees}
\author{ Honghao Lin \footnote{Equal Contribution.} \footnote{Computer Science Department, Carnegie Mellon University. \texttt{honghaol@andrew.cmu.edu.}}
	\and
	Tian Luo\footnote{Computer Science Department,
	Carnegie Mellon University.
	\texttt{tianl1@andrew.cmu.edu.}}
	\and
	David P. Woodruff\footnote{Computer Science Department, Carnegie Mellon University. \texttt{dwoodruf@cs.cmu.edu.} Honghao Lin and David Woodruff would like to thank for partial support from the National Science Foundation (NSF) under Grant No. CCF-1815840.}
	%% \AND
	%% Coauthor \\
	%% Affiliation \\
	%% Address \\
	%% \texttt{email} \\
	%% \And
	%% Coauthor \\
	%% Affiliation \\
	%% Address \\
	%% \texttt{email} \\
	%% \And
	%% Coauthor \\
	%% Affiliation \\
	%% Address \\
	%% \texttt{email} \\
}
\begin{document}

\date{\vspace{-5ex}}

\maketitle

\begin{abstract}
A treap is a classic randomized binary search tree data structure that is easy to implement and supports $O(\log n)$ expected time access. However, classic treaps do not take advantage of the input distribution or patterns in the input. Given recent advances in algorithms with predictions, we propose pairing treaps with machine advice to form a learning-augmented treap. We are the first to propose a learning-augmented data structure that supports binary search tree operations such as range-query and successor functionalities. With the assumption that we have access to advice from a frequency estimation oracle, we assign learned priorities to the nodes to better improve the treap's structure. We theoretically analyze the learning-augmented treap's performance under various input distributions and show that under those circumstances, our learning-augmented treap has stronger guarantees than classic treaps and other classic tree-based data structures. Further, we experimentally evaluate our learned treap on synthetic datasets and demonstrate a performance advantage over other search tree data structures. We also present experiments on real world datasets with known frequency estimation oracles and show improvements as well.
\end{abstract}

\section{Introduction}

Querying ordered elements is a fundamental problem in computer science. Classically, hash tables and various tree-based data structures such as Red-Black Trees, AVL Trees, and Treaps have been used to efficiently solve this problem. While hash tables are very efficient and extensively used in practice, trees are more memory-efficient and can dynamically resize with greater convenience. Additionally, search trees can offer extra functionality over hash tables in the form of successor/predecessor, minimum/maximum, order statistics, and range query capabilities. In practice, self-balancing binary search trees are used in routing tables, database indexing (B-Trees), the Linux kernel (Red-Black Trees), and various implementations of collections, sets, and dictionaries in standard libraries.

Classic binary search tree data structures often support these functionalities in $O(\log n)$ time. However, most binary search tree implementations, such as Red-Black Trees and AVL Trees, do not leverage patterns in the data to improve performance; instead, they provide a worst-case of $O(\log n)$ time per access. Splay Trees are able to implicitly take advantage of the underlying input distribution without any information about the distribution as they are, up to a constant factor, statically optimal and conjectured to be dynamically optimal \cite{splay}. Unfortunately, each access is accompanied by a series of rotations that is proportional to the number of nodes visited during the access, which increases the access time by a possibly large constant factor. On the other hand, if the underlying distribution is known, then one can generate a statically optimal tree in $O(n^2)$ time \cite{optbst} or an approximately statically optimal tree in $O(n\log n)$ time \cite{nearoptbst}; however, these methods do not allow for dynamic insertion and deletion operations.

A natural idea that arises is to use patterns in data to improve the efficiency of these data structures. In recent years, the field of learning-augmented algorithms has blossomed (see \cite{mitzenmacher2020algorithms} for a survey). Given a predictor that can output useful properties of the dataset, we can then leverage these predictions to optimize the performance of the algorithm based on the predicted patterns. In this paper, we present a learning-augmented binary search tree that is the first to support efficient range queries and order statistics. 

In summary, we present the following contributions:
\begin{itemize}
\item We introduce a learning-augmented Treap structure which exploits a rank prediction oracle to decrease the number of comparisons needed to find an element.
\item We analyze our learning-augmented Treap and provide theoretical guarantees for various distributions. We further show that our learning-augmented Treap is robust under a reasonable amount of noise from the oracle and that it performs no worse than a random Treap when the oracle is inaccurate, for any input distribution $\mathcal{D}$, up to an additive constant.
\item We experimentally evaluate the performance of our learning-augmented Treap over synthetic distributions and real world datasets. On synthetic distributions, we show improvements of over $25\%$ compared to the best classical binary search trees. On real world datasets, we show that the performance is comparable to the best among popular classical binary search trees and show significant improvements over non-learned Treaps.
\end{itemize}

\subsection{Motivation for Learning-Augmented Treaps}

In a binary search tree, more frequently accessed items should appear closer to the root to reduce the average number of comparisons in order to retrieve the item. However, Red-Black Trees, AVL Trees, and non-learned Treaps do not take advantage of this property, while Splay Trees exploit this only to the extent that more recent items are placed near the root. 

Given an oracle that predicts the ranks of elements, it is natural to build a tree in which the top-ranked items are near the root. These oracles are indeed realistic; for example, we can use frequency estimators to approximately rank the elements. Hashing-based approaches, such as Count-Min \cite{countmin} and Count-Sketch \cite{countsketch}, have been shown to be efficient and effective in this regard. Further, recent advances in the learning-augmented space have spurred the development of learning-augmented frequency estimators \cite{hsu}. In our experiments, we use the trained machine learning oracle from Hsu et al. \cite{hsu} as our frequency estimator.

With the availability of such a predictor, the motivation of augmenting a classic binary search tree data structure with a learned oracle is clear. Red-Black Trees and AVL Trees are uniquely determined by the insertion order of the elements and while it is feasible to insert the elements in such an order such that the top-ranked items are near the root, it is not clear how to support insertions and deletions to maintain this property. On the other hand, the order of insertions matters less for a Splay Tree, especially over a long sequence of operations, as it is self-adjusting. Our attempts at producing a learning-augmented Splay Tree have been unfruitful; this was largely due to the high overhead associated with rotations and difficulties in determining whether to splay an element. Instead of a Splay Tree, statically optimal trees could also be built with a frequency estimation oracle; however, these trees are unable to support insertions or deletions after initialization.

Treaps are simpler to analyze and can naturally be adapted in the learning-augmented setting. Indeed, Treaps are uniquely determined by the priorities of each key (given that all priorities are unique) and elements with higher priority appear closer to the root of tree. In this paper, we suggest assigning learned priorities to the Treap instead of priorities being drawn from a distribution $\mathcal{D}$; specifically, we assign the learned frequency as the priority. With this adjustment, we are able to efficiently support insertions and deletions, among other tree operations, while improving access time. We note that in the paper that introduced Treaps \cite{treaps}, a modification was suggested where every time an element was accessed, the new priority of the element was set to be the maximum of the old priority and a new draw from $\mathcal{D}$. We are the first to learn the priorities using a machine learning oracle.  

\subsection{Related Work}

This paper follows the long line of research in the growing field of algorithms with predictions. Learning-augmented algorithms have been applied to a variety of online algorithms, such as the ski rental problem and job scheduling \cite{ski}. Further, caches \cite{cache}, Bloom filters \cite{bloom}, index structures \cite{index}, and Count-Min and Count-Sketch \cite{hsu} are among the many data structures that have had a learning-augmented counterpart suggested. 

In particular, \cite{index} suggests replacing B-Trees (or any other index structure) with trained models for querying databases. In their work, instead of traversing the B-Tree to access the position of a record, they use a neural net to directly point towards the record. Our work is different from theirs since it keeps the search tree and instead optimizes the structure of the tree for faster queries; through this, we are able to support common additional tree-based operations such as traversal, order statistics, merge/join, etc.

Our work uses the frequency estimation oracle trained in Hsu et al. \cite{hsu} on a search query dataset (AOL) and an IP traffic dataset. Since then, other papers have used these predictions as the basis of their learned structures \cite{datastream}. Furthermore,`~\cite{betteroracle} has shown an improved oracle for the IP dataset, which shows significant improvements in accuracy.

\section{Preliminaries}

We use $[n]$ to denote the set $\{1, \dots, n\}$; further, we identify the set of keys in our binary tree with $[n]$. For a sequence of $m$ queries, we let $e_i \in [n]$ be the $i^{th}$ most frequent item with frequency $f_i$, breaking ties randomly. In our analysis, we also assume that the input distribution is the same throughout the duration of the query sequence and that the frequency observed in the sequence of $m$ queries exactly reflects its true distribution, as in element $e_i$ occurs with probability $p_i = \frac{f_i}{m}$. We will define the rank of element $e_i$ to be $i$ and the ranking ordering to be $e_1, \dots, e_n$.

\textbf{Treaps:} Treaps are binary search trees that also hold an additional field per node that stores the priority of that node. For node $i$, we denote the priority of the node to be $w_i$. At the end of any operation, in addition to the binary search tree order invariant, a Treap always satisfies the heap invariant, that is, if node $x$ is an ancestor of node $y$, then $w_x \geq w_y$. Classically, the priorities of a Treap are drawn from a continuous distribution so as to not have any duplicate priorities.

Insertion in a Treap is simple. We insert the new node by attaching it as a leaf in the appropriate position (i.e., satisfying the order invariant) in the Treap. Then, we repeatedly  rotate the new node and its parent until the heap invariant is satisfied. Deletion is achieved by rotating the node down until it is a leaf and detaching the node; we pick the child with the greater priority to perform the rotation.

We will refer to a Treap where priorities are assigned based on the rank or frequency of the element as a learned Treap, while a classic random Treap will be referred to as a random Treap.

Throughout the analysis, we make the assumption that the rank order of the keys is random, as in, $e_1, \dots, e_n$ is a random permutation of $[n]$. \hyperref[sec:var]{Section \ref*{sec:var}} shows how to remove this assumption; however, we keep this assumption for the first part of the analysis for ease. Furthermore, we will also assume for convenience that the frequencies of the keys are unique, as in $f_i = f_j$ if and only if $i = j$. To remove this assumption, we can break ties randomly. If elements $x$ and $y$ have the same frequency and the tie is broken in favour of $x$, we will say that $x$ has lower rank than $y$.

\textbf{Zipfian Distribution:} In our analysis, we analyze the performance of a Treap under the Zipfian distribution. Specifically, the Zipfian distribution with parameter $\alpha$ has frequencies $f_i = \frac{m}{i^\alpha H_{n, \alpha}}$ where $H_{n, \alpha} = \sum_{i = 1}^n \frac{1}{i^\alpha}$ is the $n^{th}$ generalized harmonic number of order $\alpha$.

\section{Learning Augmented Treaps}
\label{sec:analysis}

In the following sections, we assume access to a perfect oracle and analyze the theoretical performance of our learned Treaps versus random Treaps. In \hyperref[sec:robust]{Section \ref*{sec:robust}}, we discuss the robustness and performance guarantees of our learned Treaps when we are given a noisy oracle. In \hyperref[sec:lim]{Section \ref*{sec:lim}}, we explore performance guarantees of our oracle if given a less powerful oracle. Finally, in \hyperref[sec:var]{Section \ref*{sec:var}} we explore a modified version of the learned Treap that removes the assumption that the rank ordering is a random permutation.

\subsection{Treap Operations}

\subsubsection{Treap Initialization}

Given a predictor that outputs the frequency rank of an element, we assign a priority equal to the frequency rank of the element and insert the element into the Treap. Similarly, if we had a frequency estimation oracle instead of a rank-estimation oracle, we can insert the element into the Treap with priority equal to the frequency estimate. 

\subsubsection{Access}

We present the following theorem that bounds the expected depth of $e_i$ in a learned Treap.

\begin{theorem}
\label{thm:EDepth}
The expected depth of $e_i$ in a learned Treap is $2H_i-1$, where $H_i$ is the $i$-th Harmonic number.
\end{theorem}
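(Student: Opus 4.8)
The plan is to exploit the standard structural fact that a treap with distinct priorities coincides with the binary search tree obtained by inserting its elements in decreasing order of priority. Since more frequent elements receive higher priority, this is the BST built by inserting $e_1, e_2, \dots, e_n$ in that order. The first step I would take is to observe that every ancestor of $e_i$ must have strictly higher priority than $e_i$, and hence must lie among $e_1, \dots, e_{i-1}$. It follows that the depth of $e_i$ is completely unaffected by the lower-priority elements $e_{i+1}, \dots, e_n$, so it suffices to analyze the treap restricted to the $i$ highest-priority keys $e_1, \dots, e_i$. In this restricted treap, $e_i$ is the unique element of lowest priority and is therefore a leaf.

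Next I would invoke the assumption that the rank ordering $e_1, \dots, e_n$ is a random permutation of $[n]$. This guarantees that the keys of $e_1, \dots, e_i$ occupy positions whose relative order is uniformly random; ranking these $i$ keys from $1$ to $i$ by value, the element $e_i$ sits at a uniformly random position $p \in \{1, \dots, i\}$, while the priorities of the remaining $i-1$ elements appear in uniformly random relative order. The reduction to this uniform model is what lets the subsequent probability computation go through cleanly.

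I would then use the classical treap ancestor characterization: the element $e_j$ sitting at key-position $a$ is an ancestor of $e_i$ precisely when $e_j$ has the highest priority among all elements whose key-position lies in the interval between $a$ and $p$. Because $e_i$ has the lowest priority of the $i$ elements, the event depends only on the $|a-p|$ elements strictly between $p$ (exclusive) and $a$ (inclusive), and by symmetry of the uniform priorities its probability is exactly $1/|a-p|$. Summing over all $a \neq p$ yields
\[
\sum_{a < p} \frac{1}{p-a} + \sum_{a > p} \frac{1}{a-p} = H_{p-1} + H_{i-p}
\]
expected strict ancestors, so the expected depth conditioned on the position $p$ equals $1 + H_{p-1} + H_{i-p}$, where the $+1$ accounts for $e_i$ itself.

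Finally, I would average over the uniform position $p$, obtaining $\E[\mathrm{depth}] = 1 + \frac{1}{i}\sum_{p=1}^{i}(H_{p-1}+H_{i-p})$. Both inner sums equal $\sum_{k=0}^{i-1} H_k$, which the identity $\sum_{k=1}^{m} H_k = (m+1)H_m - m$ reduces to $H_i - 1$, giving the claimed value $2H_i - 1$. The main obstacle is not the harmonic-sum arithmetic, which is routine, but rather establishing the first two steps rigorously: namely that the only ancestors of $e_i$ are higher-priority elements (so the sub-treap reduction is valid) and that conditioning on $e_i$'s position leaves the relative priorities of the interval elements uniform, which is exactly what underwrites the $1/|a-p|$ probability.
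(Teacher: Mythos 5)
Your proof is correct, but it takes a genuinely different route from the paper's. The paper argues by analogy: it models the depth of $e_i$ as the number of comparisons in a randomized binary search for a random element among $[i]$ (ancestors playing the role of random pivots), sets up the recurrence $T(i) = 1 + \frac{2}{i-1}\sum_{k=1}^{i-1}\frac{k}{i}T(k)$, simplifies it to $T(i) = \frac{2}{i} + T(i-1)$, and solves to get $2H_i - 1$. You instead use linearity of expectation over the classical treap ancestor characterization (the element at key-position $a$ is an ancestor of $e_i$ at position $p$ iff it carries the maximum priority on the interval between them), obtaining the exact conditional depth $1 + H_{p-1} + H_{i-p}$ given $e_i$'s position and then averaging over the uniform $p$. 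Both reductions to the top-$i$ sub-treap are the same, and your harmonic bookkeeping checks out: $\frac{1}{i}\sum_{p=1}^{i} H_{p-1} = \frac{1}{i}\left(iH_{i-1} - (i-1)\right) = H_i - 1$, so the total is $1 + 2(H_i - 1) = 2H_i - 1$ (note your phrasing is slightly loose — the sum $\sum_{k=0}^{i-1}H_k$ itself equals $iH_{i-1}-(i-1)$, and it is only after dividing by $i$ that one gets $H_i - 1$). Your indicator-based argument is more self-contained and rigorous: it makes explicit the exchangeability step that justifies the $1/|a-p|$ probability, and it yields the full conditional expectation given the key position as a by-product, which would be convenient for variance or concentration analysis. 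The paper's recurrence is terser and deliberately sets up the randomized-binary-search framing that it reuses for the high-probability bound in the following theorem, where your per-ancestor indicators would require a separate concentration argument (the indicators are not independent).
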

\begin{proof} 
Consider the set of elements with higher priority, i.e., $S = \{e_k\mid k \leq i\}$. Notice that only elements in $S$ can be ancestors of $e_i$ and $e_i$ cannot be an ancestor of any element in $S$. Since $e_1, \dots, e_n$ is a random permutation of $[n]$, $e_1, \dots, e_i$ is a random permutation of $\{e_1,\dots,e_i\}$. 

Under these assumptions, the number of comparisons needed to access $e_i$ is equivalent to the number of comparisons needed to correctly insert a random element $x \in [i]$ in a sorted array of elements $[i]\setminus \{x\}$ using binary search, where pivots are chosen randomly. Here, the pivots are analogous to the ancestors of $e_i$.

This motivates the following recurrence for computing the expected depth of $e_i$:

$$T(i) = 
\begin{cases}
1 & {i = 1}\\
1 + \frac{2}{i-1}\sum_{k = 1}^{i-1}\left(\frac{k}{i}T(k)\right) & \text{otherwise}
\end{cases}
$$
which simplifies to
$$T(i) = 
\begin{cases}
1 & {i = 1}\\
\frac{2}{i} + T(i-1) & \text{otherwise}
\end{cases}
$$
This recurrence evaluates to $T(i) = 2H_i - 1$.
\end{proof}

\begin{theorem}
\label{thm:EDepthwhp}
The expected depth of $e_i$ in a learned Treap is $O(\log i)$ with high probability.
\end{theorem}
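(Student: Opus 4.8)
The plan is to write the depth of $e_i$ as a sum of independent indicator random variables, one per potential ancestor, and then apply a Chernoff bound. As in the proof of \Cref{thm:EDepth}, the only possible ancestors of $e_i$ are the higher-priority elements $S = \{e_k : k \le i\}$. First I would observe that the ancestor relationships between $e_i$ and the elements of $S$ are identical in the full Treap and in the Treap restricted to $S$: any key outside $S$ has priority below that of $e_i$, so it can never be the maximum-priority key on an interval containing $e_i$, and hence can never affect whether some $e_k \in S$ is an ancestor of $e_i$. This lets me work entirely within the random-permutation model on the $i$ elements of $S$, in which $e_i$ has the lowest priority and the key order of $e_1,\dots,e_i$ is a uniform permutation.

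Next I would introduce coordinates. Sort the keys of $S$ and write them as $v_1 < \dots < v_i$, with $e_i = v_j$; since the permutation is uniform, $j$ is uniform on $[i]$. Using the standard Treap characterization, $v_\ell$ is an ancestor of $v_j$ exactly when $v_\ell$ has the highest priority among the keys lying between them, i.e. among $\{v_\ell,\dots,v_j\}$ when $\ell<j$ and among $\{v_j,\dots,v_\ell\}$ when $\ell>j$. Because $v_j=e_i$ is the minimum-priority key, it is never this maximum, so the condition reduces to $v_\ell$ being the top priority of $\{v_\ell,\dots,v_{j-1}\}$ (for $\ell<j$) or of $\{v_{j+1},\dots,v_\ell\}$ (for $\ell>j$). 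Letting $X_\ell$ denote these ancestor indicators, the depth of $e_i$ is $1+\sum_{\ell\ne j}X_\ell$, with $\E[X_\ell] = 1/(j-\ell)$ for $\ell<j$ and $\E[X_\ell]=1/(\ell-j)$ for $\ell>j$, giving mean $\mu = H_{j-1}+H_{i-j}$.

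The crucial step is to argue that the $X_\ell$ are mutually independent. For the left indicators this is the classical fact that the ``record'' indicators of a uniformly random sequence are independent: $X_\ell$ depends only on the relative priority order of $v_1,\dots,v_{j-1}$, and whether $v_\ell$ is a right-to-left record is independent across $\ell$. The same holds for the right indicators, and since the left keys and the right keys occupy disjoint positions of the random priority permutation, their relative orders—and hence the two blocks of indicators—are independent of one another. I expect verifying this independence under the conditioning that $e_i$ holds the minimum priority to be the main subtlety, since that conditioning shifts each marginal from $1/(j-\ell+1)$ to $1/(j-\ell)$ while, one must check, leaving the product structure intact.

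Finally, with $\mathrm{depth}(e_i)-1$ written as a sum of independent indicators of mean $\mu = H_{j-1}+H_{i-j} \le 2H_i = O(\log i)$, I would invoke a multiplicative Chernoff bound in the form $\Pr[\sum_{\ell\ne j} X_\ell \ge R] \le 2^{-R}$, valid for $R \ge 6\mu$. Choosing $R = \Theta(\log i)$ large enough that $R \ge 6\mu$ for all $j$ (possible since $\mu = O(\log i)$ uniformly in $j$) yields $\Pr[\mathrm{depth}(e_i) \ge R] \le i^{-\Omega(1)}$, where the constant in the exponent can be made as large as desired by enlarging the constant in $R$. No averaging over $j$ is needed, since the tail estimate holds for every value of $j$. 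This establishes that the depth of $e_i$ is $O(\log i)$ with high probability.
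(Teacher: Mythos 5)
Your proof is correct, but it takes a genuinely different route from the paper's. The paper reduces the depth of $e_i$ to the number of probes of a randomized binary search over $[i]$ and runs the textbook QuickSort-style argument: if $X_k$ is the size of the active interval after $k$ steps, then with probability $\tfrac12$ the random pivot lands in the middle half, giving $\E[X_k] \le (7/8)^k i$, and Markov's inequality with $k = c\log_{8/7} i$ yields a $1/i^{c-1}$ tail. You instead decompose $\mathrm{depth}(e_i) - 1$ into ancestor indicators, recognize them (after restricting to the higher-priority set $S$ and conditioning on the position $j$ of the minimum-priority element) as right-to-left record indicators of uniform random permutations on the two blocks flanking $v_j$, verify their mutual independence — which does hold, since the induced relative orders on disjoint position sets are independent and the conditioning on $e_i$ being the minimum cleanly decouples the left and right blocks, shifting each marginal to $1/(j-\ell)$ exactly as you note — and then apply a multiplicative Chernoff bound. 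Your route is somewhat longer and its independence verification is the delicate step, but it buys more: a per-$j$ exponential tail with explicit, tunable constants, and as a byproduct the exact mean $1 + H_{j-1} + H_{i-j}$, which averages over uniform $j$ to $2H_i - 1$ and thus independently recovers \Cref{thm:EDepth}. The paper's argument is more elementary (only Markov, no independence claims) but gives weaker constants, since the depth threshold is $c\log_{8/7} i$ with the polynomial tail exponent tied to the same $c$. One small caveat: the Chernoff form $\Pr[\sum_\ell X_\ell \ge R] \le 2^{-R}$ for $R \ge 6\mu$ is standard for sums of independent, non-identically distributed Bernoulli variables, so your invocation is fine, but you should cite or state that form explicitly since it is less common than the symmetric versions.
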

\begin{proof}
Again, we analyze the depth of element $e_i$ by examining the number of comparisons needed to insert a random element $x \in [i]$ in a sorted array of $[i]\setminus x$ use random binary search. We employ a similar technique to the classic high probability analysis of QuickSort.

Suppose on iteration $k$, the size of the array being searched is $X_k$. With probability $\frac 12$, the randomized pivot is situated in the range $[\frac 14 X_k, \frac 34 X_k]$. In this case, $X_{k+1} \leq \frac 34 X_k$. Otherwise, if the pivot does not land in this range, we know that $X_{k+1} \leq X_k$ trivially. We get the following:
$$\mathbb{E}[X_{k}] \leq \frac 12 X_{k-1} + \frac 38 X_{k-1} = \frac 78 X_{k-1}$$
Since $X_0 \leq i$, we have
$$\mathbb{E}[X_k] \leq \left(\frac{7}{8}\right)^k X_0 \leq \left(\frac{7}{8}\right)^ki$$
The probability that the randomized binary search uses more than $k$ iterations is exactly $\Pr\{X_k\geq 1\}$. Using Markov's Inequality and setting $k = c\log_{\frac{8}{7}}i$ for some constant $c$ gives

$$ 
\Pr\{X_k\geq 1\} \leq \mathbb{E}[X_k] \leq \frac{1}{i^{c-1}}
$$

Therefore, setting $c \geq 2$ implies that the expected depth of $e_i$ is $O(\log i)$ with high probability.
\end{proof}
\begin{theorem}
\label{thm:EntireTreeWHP}
With constant probability, for every $i$, $e_i$ has depth $O(\log i)$. In other words, the entire tree is well-balanced.
\end{theorem}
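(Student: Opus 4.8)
The plan is to obtain the simultaneous ``every $i$'' statement from the per-element tail bound already established inside the proof of \Cref{thm:EDepthwhp}, combined with a union bound over all $n$ indices. Recall that, for a fixed $i$, the depth of $e_i$ equals the number of iterations of the random binary search on an array of initial size $X_0 \le i$, and the proof of \Cref{thm:EDepthwhp} shows $\Pr[\mathrm{depth}(e_i) > k] = \Pr[X_k \ge 1] \le (7/8)^k i$. Setting $k = c\log_{8/7} i$ converts this into the clean tail estimate
\begin{equation}
\Pr\!\left[\mathrm{depth}(e_i) > c\log_{8/7} i\right] \le \frac{1}{i^{\,c-1}},
\end{equation}
which is precisely the quantity I would feed into a union bound.

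First I would dispose of the small indices, where $\log i$ is too small for the bound above to be meaningful. The element $e_1$ has the highest priority and is therefore always the root, so its depth is deterministically $1$; more generally, the proof of \Cref{thm:EDepth} shows that the only possible ancestors of $e_i$ lie in $\{e_1,\dots,e_{i-1}\}$, so $\mathrm{depth}(e_i)\le i$ holds always. This lets me absorb any bounded prefix of indices into an additive $O(1)$ term and keep the $c\log_{8/7} i$ bound in force only where it is informative. Next I would union bound over $i \ge 2$:
\begin{equation}
\Pr\!\left[\exists\, i\ge 2:\ \mathrm{depth}(e_i) > c\log_{8/7} i\right] \le \sum_{i=2}^{n} \frac{1}{i^{\,c-1}} \le \int_{1}^{\infty} \frac{dx}{x^{\,c-1}} = \frac{1}{c-2},
\end{equation}
valid for any constant $c>2$. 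Choosing $c$ a sufficiently large constant (for instance $c=4$, giving $1/2$) makes the right-hand side a constant strictly less than $1$, so with probability $1-\tfrac{1}{c-2} = \Omega(1)$ every $e_i$ simultaneously satisfies $\mathrm{depth}(e_i) = O(\log i)$, which is the claimed well-balancedness.

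The step I expect to be the real crux is conceptual rather than computational: understanding why this argument yields only \emph{constant} probability rather than the high probability of \Cref{thm:EDepthwhp}. The series $\sum_{i} i^{-(c-1)}$ converges to a constant bounded away from $0$ for every fixed $c$, so no fixed choice of the constant hidden in the per-node $O(\log i)$ bound can push the failure probability to $o(1)$; one would have to take $c \to \infty$, which would inflate the depth bound itself. This is exactly the difference between a per-element $O(\log i)$ guarantee and a uniform $O(\log n)$ guarantee: in the latter, each node fails with probability at most $n^{-(c-1)}$, and the union bound over $n$ nodes still leaves $n^{-(c-2)} = o(1)$, so the uniform bound holds with high probability. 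I would make sure the write-up flags this distinction so that the ``constant probability'' in the statement is not mistaken for slack in the proof technique.
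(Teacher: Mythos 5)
Your proposal is correct and follows essentially the same route as the paper: the per-element tail bound $\Pr[\mathrm{depth}(e_i) > c\log_{8/7} i] \le i^{-(c-1)}$ from \Cref{thm:EDepthwhp}, special handling of the smallest indices (the paper fixes $X_1 = 1$, $X_2 = 2$ deterministically and unions over $i \ge 3$ with $c = 3$, bounding the failure probability by $\pi^2/6 - 1.25 \approx 0.39$, whereas you union over $i \ge 2$ and bound the tail sum by $\int_1^\infty x^{-(c-1)}\,dx = \frac{1}{c-2}$ with $c = 4$), and a union bound yielding constant success probability. Your closing observation --- that the convergent-but-nonvanishing series is exactly why one gets constant rather than high probability here, in contrast to a uniform $O(\log n)$ bound --- is a correct and worthwhile clarification that the paper leaves implicit.
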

\begin{proof}
Again, let $X_i$ be the depth of element $e_i$. Notice that $X_1$ and $X_2$ are $1$ and $2$, respectively, with probability $1$. From \autoref{thm:EDepth}, for $i \geq 3$, $X_i \leq O(\log i)$ with probability at most $\frac{1}{i^{c-1}}$ for some constant $c$. Applying a union bound over elements $X_i$ for $i \geq 3$ gives

$$\Pr\{\bigcup_{i = 3}^n X_i \leq O(\log i)\} \leq \sum_{i = 3}^n \frac{1}{i^{c-1}}$$

For $c = 3$, $\sum_{i = 3}^n \frac{1}{i^{c-1}} \leq \frac{\pi^2}{6}-1.25 \approx 0.39$.
\end{proof}

\subsubsection{Insertion/Deletion and Priority Updates}

\begin{corollary}
\label{col:insertion}
The expected time of an insertion, deletion or priority update is $O(\log n)$.
\end{corollary}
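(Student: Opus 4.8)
The plan is to bound the cost of each operation by the depth of the relevant node in the learned Treap, and then invoke the depth bounds already established in \autoref{thm:EDepth} and \autoref{thm:EDepthwhp}. For an insertion, we first attach the new node as a leaf in the appropriate position; finding this position requires a root-to-leaf search whose cost is the depth of the insertion point, and then we perform a sequence of rotations to restore the heap invariant. The key observation is that the number of rotations performed during an insertion is at most the final depth of the inserted node, since each rotation moves the node up by exactly one level. Thus the total insertion cost is dominated by the depth of the inserted element. For deletion, the cost is symmetric: we rotate the target node down until it becomes a leaf and detach it, and the number of such rotations is again bounded by the node's depth, since each rotation pushes it down one level. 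A priority update can be handled as a deletion followed by a reinsertion (or equivalently, rotations up or down depending on whether the priority increased or decreased), so its cost is also governed by the relevant node depths.

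First I would make precise the claim that rotations during insertion/deletion are bounded by depth. For insertion of an element that will occupy rank position $i$, the node ends at some depth $d$ in the final tree, and the number of upward rotations equals the number of ancestors it must bypass, which is at most $d$. By \autoref{thm:EDepth}, the expected final depth is $2H_i - 1 = O(\log i) = O(\log n)$, so the expected number of rotations is $O(\log n)$. The cost of locating the insertion position is likewise the depth of the parent-to-be, which is also $O(\log n)$ in expectation. Summing these gives expected insertion time $O(\log n)$.

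For a worst-case element (one whose frequency rank could be as large as $n$), I would use \autoref{thm:EDepthwhp}, which guarantees depth $O(\log i) = O(\log n)$ with high probability, to confirm that no single operation is expensive with non-negligible probability; combined with the linearity of expectation over the search and rotation phases, this yields the $O(\log n)$ expected bound uniformly over all keys. The same depth bounds apply to deletion and priority updates by the symmetry described above.

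The main obstacle, and the only part requiring genuine care, is justifying that rotations preserve the randomness assumptions underlying the depth analysis. The depth bounds in \autoref{thm:EDepth} were derived assuming $e_1, \dots, e_n$ is a random permutation of $[n]$, and I must argue that after an insertion or deletion the resulting Treap is still distributed as a learned Treap on its current key set, so that the expected-depth recurrence continues to apply. Since a Treap is uniquely determined by the set of (key, priority) pairs, and the priorities are fixed by the oracle rather than re-randomized, the relevant randomness is over the random rank permutation; I would argue that the conditional distribution of the tree on the surviving keys after a deletion, or on the augmented key set after an insertion, is exactly that of a learned Treap built from scratch on that key set, so the $O(\log n)$ depth bound transfers directly. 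Once this structural invariance is in place, the corollary follows immediately from the preceding theorems.
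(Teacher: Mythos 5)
Your overall route is the same as the paper's: bound the cost of each operation by the depth of the relevant attach/detach point and invoke Theorem~\ref{thm:EDepth}; the paper's proof is simply a terser version of this. However, one of your sub-claims is false as stated: the number of upward rotations during an insertion is \emph{not} bounded by the final depth $d_f$ of the inserted node. It equals $d_0 - d_f$, where $d_0$ is the depth of the leaf position at which the node is first attached, and this can vastly exceed $d_f$ (a high-priority element is attached deep in the tree and rotated up to near the root, so $d_f$ may be $O(1)$ while the rotation count is $\Theta(\log n)$). The same objection applies to your later sentence ``the number of upward rotations equals the number of ancestors it must bypass, which is at most $d$.'' The slip happens to be harmless in your argument only because you separately bound the search cost by $d_0$, and the rotation count is trivially at most $d_0$ as well; since the attach point is the deeper of the predecessor and successor of the new key, its expected depth is $O(\log n)$ by Theorem~\ref{thm:EDepth}, giving expected total cost $O(\log n)$. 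Symmetrically, for deletion the controlling quantity is the depth at which the node is finally detached (the paper's node $y$), not the node's initial depth, and your phrase ``bounded by the node's depth'' should be read that way to be correct.

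Two smaller points. Your appeal to Theorem~\ref{thm:EDepthwhp} is unnecessary: the corollary asserts an expectation bound, and Theorem~\ref{thm:EDepth} alone gives $2H_i - 1 \leq 2H_n - 1 = O(\log n)$ uniformly over all ranks, which is exactly what the paper uses. On the other hand, your closing paragraph on structural invariance is a genuine addition the paper leaves implicit: since a treap is uniquely determined by its (key, priority) pairs, the tree after any sequence of insertions and deletions is identical to the treap built from scratch on the current key set, so the depth analysis transfers verbatim; making this explicit strengthens rather than departs from the paper's argument.
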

\begin{proof}
Suppose during the insertion process of a node $x$, we attach $x$ to node $y$ as a leaf. Then by \hyperref[thm:EDepth]{Theorem \ref*{thm:EDepth}}, the depth of $y$ is $O(\log n)$ in expectation. Similarly, suppose during the deletion process of node $x$, we detach node $x$ when it is a child of node $y$. By \hyperref[thm:EDepth]{Theorem \ref*{thm:EDepth}}, the time it takes is $O(\log n)$. 

Similarly, a priority update takes at most $O(\log n)$ time. 
\end{proof}

\subsubsection{Other Operations}
\label{sec:otherops}
Our learned Treap could also be optimized for other tree-based operations. Under these modifications, the following operations could be supported by a learned Treap with time similar to that of an access on a learned Treap.

\textbf{Range Queries:} Consider the simple operation of counting the number of elements between keys $x$ and $y$. This operation is commonly implemented by augmenting every node with a field that stores the size of the subtree rooted at the node. Counting the number of elements in the ranges reduces to traversing from the root to $x$ and from the root to $y$, which is similar to the process of accessing $x$ and $y$. Thus, the predictor can learn the frequency distribution of the boundaries of the range query to optimize our Treap. Other such operations may include the standard RangeSum operation, which outputs the sum of the values stored in each key of the tree.

\textbf{Successor/Predecessor:} On a query to the successor of key $x$, the output would be the smallest key greater than $x$. Among the many ways to implement this functionality, a simple way is to keep a pointer that points to the successor/predecessor. When finding the successor of key $x$, we simply access $x$ in the Treap and use the stored pointer to access the successor. Our predictor can learn the frequency distribution of successor queries to $x$ and set the priority of $x$ accordingly in the learned Treap.

When supporting this operation, insertion and deletion become more complicated. When inserting element $x$, we must change the pointers of both the successor and predecessor of $x$ accordingly; however, this requires at most a constant number of pointer changes.

%\textbf{Order Statistics:} Unlike the previous operations, in order to support improved order statistic querying, efficient insertion and deletion can to be supported. As with the above operations, our predictor can also learn the distribution of these order statistic queries and assign the frequency of these queries as the priority of the node. However, when inserting/deleting the rank-$k^{th}$ element, we would have to update the priorities of up to $n-k$ other nodes, which can be extremely costly.

\subsection{General Distributions}

In this section, we analyze the expected cost per access of a learned Treap and a random Treap for an arbitrary frequency distribution $\mathcal{D}$.

\begin{lemma}
\label{lem:ECost}
The expected cost of a single access on a learned Treap is $\sum_{i = 1}^n p_i(2H_i - 1)$. 
\end{lemma}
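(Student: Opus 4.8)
The plan is to reduce this directly to \autoref{thm:EDepth} by separating the two independent sources of randomness at play: the randomness in the construction of the learned Treap (which determines each node's depth) and the randomness in which element a single access requests (governed by the distribution $\mathcal{D}$). The cost of a single access is, by definition, the number of comparisons needed to locate the requested key, which is exactly its depth in the Treap. So the quantity we want is the expectation of this depth, taken jointly over both sources of randomness.

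First I would fix notation: let $A$ denote the (random) element requested by the access, so that $A = e_i$ with probability $p_i$, and let $D(e_i)$ denote the (random) depth of $e_i$ in the learned Treap. Since the query that is issued is drawn from $\mathcal{D}$ independently of the internal randomness used to build the Treap, I would condition on the event $\{A = e_i\}$ and apply the law of total expectation:
$$
\mathbb{E}[\text{cost}] = \sum_{i=1}^n \Pr[A = e_i]\,\mathbb{E}\bigl[D(e_i) \mid A = e_i\bigr] = \sum_{i=1}^n p_i\,\mathbb{E}[D(e_i)].
$$
The independence is what lets me drop the conditioning in the inner expectation, replacing $\mathbb{E}[D(e_i) \mid A = e_i]$ by the unconditional $\mathbb{E}[D(e_i)]$.

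Next I would invoke \autoref{thm:EDepth}, which already carries out the delicate part of the argument, namely computing $\mathbb{E}[D(e_i)] = 2H_i - 1$ by averaging over the random permutation of priorities. Substituting this into the sum immediately yields
$$
\mathbb{E}[\text{cost}] = \sum_{i=1}^n p_i\,(2H_i - 1),
$$
which is the claimed expression.

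There is essentially no deep obstacle here, since the heavy lifting lives in \autoref{thm:EDepth}; the only point requiring care is the justification for interchanging and separating the two expectations. Concretely, I would make sure to state explicitly that the draw of the accessed element is independent of the priority assignment, so that conditioning on $A = e_i$ does not alter the distribution of $D(e_i)$. The rest is a one-line application of linearity of expectation over the $n$ possible outcomes of the access.
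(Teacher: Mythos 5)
Your proposal is correct and matches the paper's reasoning exactly: the paper proves this lemma by noting it ``follows immediately'' from \autoref{thm:EDepth}, and your argument is simply the explicit one-line expansion (law of total expectation over the queried element, using independence of the query draw from the Treap's construction randomness) that the paper leaves implicit. Nothing in your write-up diverges from or adds a gap to the paper's approach.
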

\begin{proof}
This follows immediately from \hyperref[thm:EDepth]{Theorem \ref*{thm:EDepth}}.
\end{proof}

Since the expected cost of a single access is known to be at most $O(\log n)$ \cite{treaps}, we provide a lower bound on this expectation.
\begin{theorem}
\label{thm:ERanCost}
The expected cost of a single access on a random Treap is at least $2H_{n+1} - 4$ for any frequency distribution. 
\end{theorem}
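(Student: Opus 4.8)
The plan is to exploit the fact that a random Treap ignores the frequencies entirely: its priorities are i.i.d.\ draws from a continuous distribution, so the resulting tree is distributed exactly as a random binary search tree on the $n$ keys, and the cost of an access is governed purely by key positions. The key structural fact I would invoke is the standard random-QuickSort/random-BST identity: for two keys occupying sorted positions $a$ and $b$, the key at position $a$ is an ancestor of the key at position $b$ if and only if it has the largest priority among the $|a-b|+1$ keys lying in the interval between them (inclusive), an event of probability $\frac{1}{|a-b|+1}$. Summing over all other positions by linearity of expectation, the expected depth of the key at sorted position $j$, counting the node itself to match the convention of \autoref{thm:EDepth}, is $D_j = 1 + \sum_{i \neq j} \frac{1}{|i-j|+1} = H_j + H_{n+1-j} - 1$.

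Next I would use the random-permutation assumption to decouple frequencies from key positions. Because the map sending frequency rank $i$ to its key position is a uniformly random bijection, independent of the Treap's priorities, the queried element sits at a uniformly random position; hence the expected depth of the accessed element is the same for every frequency rank and equals the positional average $\frac{1}{n}\sum_{j=1}^n D_j$. Taking the $p_i$-weighted average over which element is queried leaves this quantity unchanged, since $\sum_i p_i = 1$. This is precisely the content of the theorem: the random Treap's cost is pinned to the average random-BST depth and cannot be reduced by any skew in $\mathcal{D}$, in contrast to the learned Treap of \autoref{lem:ECost}.

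Finally I would evaluate the sum. Using the identity $\sum_{j=1}^n H_j = (n+1)(H_{n+1}-1)$ together with the symmetry $\sum_{j=1}^n H_{n+1-j} = \sum_{j=1}^n H_j$, I obtain $\sum_{j=1}^n D_j = 2(n+1)(H_{n+1}-1) - n$, so the expected cost equals $2H_{n+1} - 3 + \frac{2(H_{n+1}-1)}{n}$. Since $H_{n+1} \geq 1$, the trailing term is nonnegative, which already yields the claimed lower bound of $2H_{n+1} - 4$ (indeed with a full additive unit of slack).

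The main obstacle is conceptual rather than computational: it is the justification that the frequency distribution washes out. One must argue carefully that the random-permutation assumption, combined with the independence of the priorities from the frequencies, makes the accessed position uniform and independent of the tree's shape; without this, a distribution concentrated on an extreme key, whose expected depth is only $H_n$, would violate the stated bound. The ancestor-probability lemma is classical and the harmonic-number arithmetic is routine, so the real care lies in setting up the correct probability space and in matching the depth-counting convention so that the clean constant emerges.
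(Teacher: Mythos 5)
Your proposal is correct and takes essentially the same route as the paper: both use the random-rank-ordering assumption to show the access cost is the uniform positional average of the classical random-BST depth $H_j + H_{n+1-j} - O(1)$ (so the $p_i$ wash out), then evaluate $\sum_{j=1}^n H_j = (n+1)(H_{n+1}-1)$ to obtain the bound. The only differences are cosmetic: the paper cites the per-position depth formula from the original Treap paper rather than deriving it via the $\frac{1}{|a-b|+1}$ ancestor-probability lemma, and your more careful bookkeeping yields the slightly tighter $2H_{n+1}-3$ plus a nonnegative term before relaxing to the stated $2H_{n+1}-4$.
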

\begin{proof}
% See \hyperref[appendix:ERanCost]{Appendix \ref*{appendix:ERanCost}}.
The expected depth of key $i$ is well-known to be $H_i + H_{n-i+1} - 2$ \cite{treaps}. Let $X$ be the depth of an access and let $X_{ij}$ be the depth of key $i$ if it is the $j^{th}$-ranked item, and $0$ otherwise.
\allowdisplaybreaks
\begin{align*}
    \mathbb{E}[X] &=\sum_{i = 1}^n \frac{1}{n}\sum_{j = 1}^n  \mathbb{E}[X_{ij}] \\
    &=  \sum_{i = 1}^n \frac{1}{n}\sum_{j = 1}^n p_j(H_i + H_{n-i+1} - 2) \\
    &= \sum_{i = 1}^n \frac{1}{n}(H_i + H_{n-i+1} - 2) \\
    &= \frac{2}{n}\sum_{i = 1}^n H_i - 1 \\
    &= \frac{2}{n}((n+1)H_{n+1} - 2n)\\
    &> 2H_{n+1} - 4 \qedhere
\end{align*}

\end{proof}

\subsection{Zipfian Distributions}

In this section, we analyze the expected cost of an access of a learned Treap where $p_i \propto \frac{1}{i^\alpha}$ for a parameter $\alpha$.

\begin{theorem}
\label{thm:zipfgeneralalpha}
The expected cost of a single access on a learned Treap is $\sum_{i = 1}^n\frac{1}{i^\alpha H_{n, \alpha}}(2H_i -1)$.
\end{theorem}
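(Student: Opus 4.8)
The plan is to derive this directly from \hyperref[lem:ECost]{Lemma \ref*{lem:ECost}}, which already establishes that the expected access cost on a learned Treap equals $\sum_{i=1}^n p_i(2H_i-1)$ for an arbitrary frequency distribution $\mathcal{D}$. All that remains is to specialize the access probabilities $p_i$ to the Zipfian case, so this theorem is essentially a one-step instantiation of the general-distribution result.

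First I would recall the Zipfian frequencies $f_i = \frac{m}{i^\alpha H_{n,\alpha}}$ from the preliminaries together with the relation $p_i = f_i/m$ stated there. This immediately yields $p_i = \frac{1}{i^\alpha H_{n,\alpha}}$, so the Zipfian access probabilities are proportional to $i^{-\alpha}$ and normalized by the generalized harmonic number $H_{n,\alpha}$. I would note in passing that this $p_i$ is a valid probability distribution, since $\sum_{i=1}^n p_i = \frac{1}{H_{n,\alpha}}\sum_{i=1}^n i^{-\alpha} = 1$ by the very definition $H_{n,\alpha}=\sum_{i=1}^n i^{-\alpha}$, confirming that the normalization is consistent.

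Substituting this value of $p_i$ into the expression from \hyperref[lem:ECost]{Lemma \ref*{lem:ECost}} then gives $\sum_{i=1}^n \frac{1}{i^\alpha H_{n,\alpha}}(2H_i-1)$, which is exactly the claimed formula. I do not expect any real obstacle in this argument: the substantive analytical work has already been done upstream, namely the per-element expected-depth bound $2H_i-1$ from \hyperref[thm:EDepth]{Theorem \ref*{thm:EDepth}} and its packaging for general distributions in \hyperref[lem:ECost]{Lemma \ref*{lem:ECost}}. The only care needed is to match the normalization constant $H_{n,\alpha}$ against the definition of $f_i$, which the check above settles; the result then follows by direct substitution.
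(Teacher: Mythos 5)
Your proposal is correct and follows exactly the paper's route: the paper likewise obtains the theorem by substituting the Zipfian probabilities $p_i = \frac{1}{i^\alpha H_{n,\alpha}}$ into the general-distribution cost formula of Lemma~\ref{lem:ECost}. Your added normalization check $\sum_{i=1}^n p_i = 1$ is a harmless extra verification that the paper leaves implicit.
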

\begin{proof}
From \hyperref[lem:ECost]{Lemma \ref*{lem:ECost}}, it is immediate that the expected cost is $\sum_{i = 1}^n\frac{1}{i^\alpha H_{n, \alpha}}(2H_i -1)$. 
\end{proof}

\begin{lemma}
\label{lem:zipfalpha1}
The expected cost of an access for $\alpha = 1$ is at most $H_n$. 
\end{lemma}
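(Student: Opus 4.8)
The plan is to start from the exact expression in Theorem~\ref{thm:zipfgeneralalpha} and reduce the desired inequality to a clean comparison of two harmonic-type sums. Setting $\alpha = 1$ and using $H_{n,1} = H_n$, the expected cost becomes $\frac{1}{H_n}\sum_{i=1}^n \frac{2H_i - 1}{i}$, so the claimed bound of $H_n$ is equivalent to $\sum_{i=1}^n \frac{2H_i - 1}{i} \le H_n^2$. First I would split this as $2\sum_{i=1}^n \frac{H_i}{i} - H_n$, which isolates $\sum_{i=1}^n \frac{H_i}{i}$ as the single quantity I need to understand.

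The key step is to evaluate $\sum_{i=1}^n \frac{H_i}{i}$ in closed form. I would write $\frac{H_i}{i} = \sum_{j \le i} \frac{1}{ij}$ and recognize the full sum as $\sum_{1 \le j \le i \le n} \frac{1}{ij}$, a sum over ordered pairs. Comparing this with the expansion $H_n^2 = \sum_{i,j} \frac{1}{ij}$, which splits by symmetry into an off-diagonal part counted twice plus the diagonal $\sum_i \frac{1}{i^2} = H_{n,2}$, gives $\sum_{j < i} \frac{1}{ij} = \frac{1}{2}(H_n^2 - H_{n,2})$. Adding back the diagonal contribution yields the identity $\sum_{i=1}^n \frac{H_i}{i} = \frac{1}{2}(H_n^2 + H_{n,2})$. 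This is the main technical content of the argument, though it amounts only to a double-counting of the product $H_n \cdot H_n$.

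Finally I would substitute back: $2\sum_{i=1}^n \frac{H_i}{i} - H_n = H_n^2 + H_{n,2} - H_n$, so the expected cost equals $H_n + \frac{H_{n,2}}{H_n} - 1$. Since $H_{n,2} = \sum_{i=1}^n \frac{1}{i^2} \le \sum_{i=1}^n \frac{1}{i} = H_n$ term by term, the correction $\frac{H_{n,2}}{H_n} - 1$ is nonpositive, which gives the bound $H_n$. The only place that requires care is deriving the identity in the second step correctly; once it is in hand, the remaining estimate is immediate.
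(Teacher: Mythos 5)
Your proposal is correct and follows essentially the same route as the paper: both evaluate $C=\sum_{i=1}^n \frac{H_i}{i}$ via the closed form $\frac{1}{2}\left(H_n^2+H_{n,2}\right)$ and substitute into $\frac{2C}{H_n}-1$, though you additionally supply the double-counting derivation of the identity and close the argument rigorously with the term-by-term bound $H_{n,2}\le H_n$, where the paper instead remarks informally that the cost approaches $H_n-1$. Incidentally, your expression $H_n+\frac{H_{n,2}}{H_n}-1$ is the correct one; the paper's $\frac{H_{n,2}}{2H_n}$ drops a factor of $2$ in simplifying $\frac{2C}{H_n}$, a harmless slip that your computation silently fixes.
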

\begin{proof}
For $\alpha = 1$, the expected cost is $\sum_{i = 1}^n\frac{1}{i H_n}(2H_i -1)$ by \hyperref[thm:zipfgeneralalpha]{Theorem \ref*{thm:zipfgeneralalpha}}.

Consider the sum $C = \sum_{i = 1}^n\frac{1}{i}(H_i)$. Observe by expanding this summation that it evaluates to $\frac{1}{2}((H_n)^2 + H_{n, 2})$. The expected cost can then be expressed as $\frac{2C}{H_n} - 1 = H_n + \frac{H_{n, 2}}{2H_n} - 1.$ This approaches $H_n - 1$ as $n$ increases.
\end{proof}
\begin{corollary}
\label{col:2fac}
The expected cost of an access for a learned Treap on a Zipfian distribution with parameter $\alpha = 1$ is approximately a factor $2$ less than that of an access on a random Treap.
\end{corollary}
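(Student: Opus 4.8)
The plan is to compare the two asymptotic expressions already in hand and show that their ratio tends to $2$. First I would read off the learned Treap's cost from \hyperref[lem:zipfalpha1]{Lemma \ref*{lem:zipfalpha1}}, which gives it exactly as $H_n + \frac{H_{n,2}}{2H_n} - 1$. Because $H_{n,2} = \sum_{i=1}^n 1/i^2$ is bounded (it converges to $\pi^2/6$) while $H_n \to \infty$, the correction $\frac{H_{n,2}}{2H_n}$ and the additive $-1$ are both $o(H_n)$; hence the learned cost is $H_n(1 + o(1))$, with leading term $H_n$.

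Next I would turn to the random Treap. \hyperref[thm:ERanCost]{Theorem \ref*{thm:ERanCost}} establishes the lower bound $2H_{n+1} - 4$, and its proof in fact evaluates the expected access cost exactly, with leading term $2H_n$ (using $H_{n+1} = H_n + \frac{1}{n+1}$, so the difference between $H_{n+1}$ and $H_n$ and the constant $-4$ contribute only $O(1)$ beyond $2H_n$). Combined with the known $O(\log n)$ upper bound on the random Treap's per-access cost, this pins the random cost at $2H_n(1 + o(1))$.

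Finally I would form the ratio of the random cost to the learned cost and take $n \to \infty$:
$$
\frac{2H_n(1+o(1))}{H_n(1+o(1))} \longrightarrow 2.
$$
Equivalently, the learned cost is asymptotically half the random cost, which is the claimed factor of $2$. I expect the only delicate point to be the bookkeeping of the lower-order terms: one must confirm that every additive constant, the $\frac{H_{n,2}}{2H_n}$ term, and the gap between $H_{n+1}$ and $H_n$ are genuinely negligible against the diverging leading term. Since each such correction is $O(1)$ while $H_n = \Theta(\log n) \to \infty$, this verification is immediate, and the factor-$2$ separation follows in the limit.
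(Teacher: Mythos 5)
Your proposal is correct and matches the paper's intended derivation exactly: the corollary is stated as an immediate consequence of \hyperref[lem:zipfalpha1]{Lemma \ref*{lem:zipfalpha1}} (learned cost $H_n + \frac{H_{n,2}}{2H_n} - 1 \to H_n - 1$) and \hyperref[thm:ERanCost]{Theorem \ref*{thm:ERanCost}} (random cost with leading term $2H_{n+1}$, computed exactly in that proof under the random-rank assumption). Your bookkeeping of the lower-order terms --- that $\frac{H_{n,2}}{2H_n}$, the additive constants, and the gap $H_{n+1} - H_n$ are all $O(1)$ against the diverging $H_n$ --- is precisely the verification the paper leaves implicit.
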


\begin{lemma}
\label{lem:zipfalphageq1}
The expected cost of an access for $\alpha > 1$ is constant. 
\end{lemma}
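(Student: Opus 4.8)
The plan is to start from the exact expression for the expected cost supplied by Theorem~\ref{thm:zipfgeneralalpha}, namely $\frac{1}{H_{n,\alpha}}\sum_{i=1}^n \frac{2H_i-1}{i^\alpha}$, and to show it is bounded by a constant independent of $n$ for every fixed $\alpha>1$. The first step is to dispense with the denominator: since the $i=1$ term of $H_{n,\alpha}=\sum_{i=1}^n i^{-\alpha}$ alone equals $1$, we have $H_{n,\alpha}\ge 1$, so $1/H_{n,\alpha}\le 1$. It therefore suffices to bound the numerator $\sum_{i=1}^n \frac{2H_i-1}{i^\alpha}$ from above by a constant uniformly in $n$.

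For the numerator I would drop the harmless $-1$ term and apply the standard estimate $H_i \le 1+\ln i$, reducing the task to showing that $\sum_{i=1}^\infty \frac{1+\ln i}{i^\alpha}$ converges. Splitting this into $\sum_{i\ge1} i^{-\alpha}$ and $\sum_{i\ge1}\frac{\ln i}{i^\alpha}$, the first piece is exactly $\zeta(\alpha)$, a finite constant precisely because $\alpha>1$.

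The only slightly delicate piece is the convergence of $\sum_{i\ge 1}\frac{\ln i}{i^\alpha}$, and that is where I would concentrate. The cleanest route avoids any integral evaluation: fix $\epsilon>0$ small enough that $\alpha-\epsilon>1$ (possible exactly because $\alpha>1$), and use $\ln i = O(i^\epsilon)$ to obtain $\frac{\ln i}{i^\alpha}=O\!\left(\frac{1}{i^{\alpha-\epsilon}}\right)$, a convergent $p$-series. Alternatively, one could invoke the integral test, since $\frac{\ln x}{x^\alpha}$ is eventually decreasing and $\int_1^\infty \frac{\ln x}{x^\alpha}\,dx=\frac{1}{(\alpha-1)^2}<\infty$ for $\alpha>1$.

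Combining the pieces, the numerator is bounded by a constant $C(\alpha)$ depending only on $\alpha$, so the expected access cost is at most $C(\alpha)=O(1)$, independent of $n$. The main obstacle is purely the tail-convergence of the $\ln i/i^\alpha$ series; once that is in hand, the rest is a direct substitution of the already-established depth bound from Theorem~\ref{thm:EDepth} into the formula of Theorem~\ref{thm:zipfgeneralalpha}.
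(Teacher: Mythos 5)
Your proof is correct, and it takes a somewhat more elementary route than the paper. Both arguments turn on the same key maneuver---choosing $\varepsilon>0$ with $\alpha-\varepsilon>1$ and absorbing the logarithmic growth of $H_i$ into a factor $i^{\varepsilon}$---but the paper packages this via Dirichlet's test, setting $a_i = H_i/i^{\varepsilon}$ and $b_i = 1/i^{\alpha-\varepsilon}$, checking that $\{a_i\}$ is eventually monotonically decreasing to $0$ and that $\sum_i b_i$ is bounded, and concluding that $\sum_i H_i/i^{\alpha}$ converges. You instead bound $H_i \le 1+\ln i$ and use $\ln i = O(i^{\varepsilon})$ to reduce directly to a convergent $p$-series (with the integral test as an alternative). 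Since all terms here are nonnegative and $a_i$ is bounded, Dirichlet's test is heavier machinery than the situation requires; your direct comparison argument is cleaner and avoids having to verify monotonicity of $H_i/i^{\varepsilon}$. You are also more careful on one bookkeeping point the paper glosses over: you explicitly dispose of the normalizing factor via $H_{n,\alpha}\ge 1$, so $1/H_{n,\alpha}\le 1$, whereas the paper simply carries the $\frac{2}{H_{n,\alpha}}$ prefactor into its final expression without comment. The trade-off is minor: the paper's Dirichlet formulation generalizes to settings where the summands are not sign-definite, but for this lemma your comparison-test argument delivers the same uniform constant bound $C(\alpha)$ with fewer moving parts.
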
      
\begin{proof}
For $\alpha > 1$, the expected cost is $\sum_{i = 1}^n\frac{1}{i^\alpha H_{n, \alpha}}(2H_i -1)$ by \hyperref[thm:zipfgeneralalpha]{Theorem \ref*{thm:zipfgeneralalpha}}.

Consider the series $a_i = \frac{H_i}{i^{\varepsilon}}$ and $b_i = \frac{1}{i^{\alpha-\varepsilon}}$ for some $\varepsilon > 0$. Observe the following properties:
\begin{itemize}
    \item Since $H_n \leq \ln(n) + 1$, $\lim_{n\to\infty} a_n = 0$. Further, $\{a_n\}$ is monotonically decreasing for large $n$.
    \item Since $\alpha > 1$, there exists $\varepsilon$ such that $\alpha - \varepsilon > 1$ and thus, $\sum_{i = 1}^n b_i \leq c$ for some constant $c$.
\end{itemize}

Recall Dirichlet's test: if $\{a_n\}$ is a monotonically decreasing sequence whose limit approaches $0$ and $\{b_n\}$ is a sequence such that $\sum_{i = 1}^\infty b_i$ is bounded by a constant, then $\sum_{i = 1}^\infty a_ib_i$ converges as well.

By these two observations and using Dirichlet's test, $\sum_{i = 1}^n a_nb_n = \sum_{i = 1}^n \frac{H_n}{n^\alpha}$ converges to a constant. The expected cost here is $\frac{2}{H_{n, \alpha}}\left(\sum_{i = 1}^n a_nb_n\right) - 1$. Therefore, the expected cost is bounded from above by a constant.
\end{proof} 
\begin{theorem}
\label{thm:staticopt}
The learned Treap is statically optimal in expectation for $\alpha \geq 1$.
\end{theorem}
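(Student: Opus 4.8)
The plan is to compare the learned Treap's expected per-access cost, already computed in \hyperref[lem:zipfalpha1]{Lemma \ref*{lem:zipfalpha1}} and \hyperref[lem:zipfalphageq1]{Lemma \ref*{lem:zipfalphageq1}}, against a lower bound on the cost of the optimal static binary search tree. Recall that a search structure is statically optimal if its expected cost per access is within a constant factor of that of the best fixed BST tailored to the access distribution. The key tool is the classical information-theoretic lower bound: for any BST on keys with access probabilities $p_1, \dots, p_n$, the expected access cost is $\Omega(H(\mathbf{p}) + 1)$, where $H(\mathbf{p}) = \sum_{i=1}^n p_i \log_2(1/p_i)$ is the Shannon entropy (the $+1$ accounts for the fact that every access makes at least one comparison). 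It therefore suffices to show that the learned Treap's expected cost is $O(H(\mathbf{p}) + 1)$ for every $\alpha \ge 1$.

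The case $\alpha > 1$ is immediate. By \hyperref[lem:zipfalphageq1]{Lemma \ref*{lem:zipfalphageq1}}, the expected access cost of the learned Treap is $O(1)$, while the optimal static tree has cost $\Omega(1)$ since any BST performs at least one comparison per access. The ratio is a constant, establishing static optimality in this regime without even needing the entropy estimate.

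The case $\alpha = 1$ is where the entropy bound does the work. I would compute the entropy of $p_i = \tfrac{1}{i H_n}$ by writing $\log_2(1/p_i) = \log_2 i + \log_2 H_n$ and splitting $H(\mathbf{p}) = \tfrac{1}{H_n}\bigl(\sum_{i=1}^n \tfrac{\log_2 i}{i} + H_n \log_2 H_n\bigr)$. Using the estimate $\sum_{i=1}^n \tfrac{\ln i}{i} = \tfrac{(\ln n)^2}{2} + O(1)$ together with $H_n = \Theta(\log n)$ then yields $H(\mathbf{p}) = \Theta(\log n)$. Combined with \hyperref[lem:zipfalpha1]{Lemma \ref*{lem:zipfalpha1}}, which bounds the learned Treap cost by $H_n = \Theta(\log n)$, this places the learned Treap's cost within a constant factor of $H(\mathbf{p})$, and hence of the optimal static BST.

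The main obstacle is the $\alpha = 1$ entropy estimate, and specifically the matching lower bound $H(\mathbf{p}) = \Omega(\log n)$ rather than merely the easy upper bound: it is this lower bound that certifies the optimal static tree cannot beat the learned Treap by more than a constant factor. Once $\sum_{i=1}^n \tfrac{\ln i}{i} = \tfrac{(\ln n)^2}{2} + O(1)$ is in hand and the constant in the entropy lower bound is tracked, the remaining asymptotic bookkeeping is routine, and the two cases combine to give static optimality for all $\alpha \ge 1$.
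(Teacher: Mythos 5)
Your proposal is correct and follows essentially the same route as the paper's proof: both cases are handled identically, with $\alpha > 1$ dispatched by the constant-cost bound of \hyperref[lem:zipfalphageq1]{Lemma \ref*{lem:zipfalphageq1}} against the trivial $\Omega(1)$ lower bound, and $\alpha = 1$ by comparing the $H_n$ cost bound of \hyperref[lem:zipfalpha1]{Lemma \ref*{lem:zipfalpha1}} against Mehlhorn's entropy lower bound for any static BST. Your version is in fact slightly more explicit than the paper's, which merely lower-bounds the entropy by $\sum_i \frac{1}{iH_n}\log i$ and asserts the constant-factor relationship, whereas you carry out the estimate $\sum_{i=1}^n \frac{\ln i}{i} = \frac{(\ln n)^2}{2} + O(1)$ to certify $H(\mathbf{p}) = \Theta(\log n)$.
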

\begin{proof}
First, consider $\alpha = 1$. The Shannon entropy, $H$, of this distribution is an asymptotic lower bound for a statically optimal tree, namely, Mehlhorn shows that for any binary tree, the weighted path length must be at least $\frac H3$ \cite{nearoptbst}. The Shannon entropy for the Zipfian distribution with $\alpha = 1$ is 
\begin{eqnarray*}
    && \sum_{i = 0}^n -p_i\log\left(p_i\right) 
     = \sum_{i = 0}^n\frac{1}{iH_n}\log\left(iH_n\right)\\
    & =& \sum_{i = 0}^n\frac{1}{iH_n}\left(\log\left(i\right)+\log(H_n)\right)
    \geq  \sum_{i = 0}^n\frac{1}{iH_n}\log\left(i\right)
\end{eqnarray*}
Clearly, this is within a constant factor of the expected cost of our learned Treap. Since the expected cost for the learned Treap is within a constant factor of the Shannon entropy, we are statically optimal up to a constant factor.

For $\alpha > 1$, the expected cost is constant; therefore, it is immediate that we are at most a constant factor more than the statically optimal binary search tree.
\end{proof}

\subsection{Noisy Oracles and Robustness to Errors}
\label{sec:robust}
In this section, we will prove that given an accurate rank prediction oracle subject to a reasonable amount of noise and error, our learned Treap's performance matches that of a perfect rank prediction oracle up to an additive constant per access.

Given element $i$, let $r_i$ be the real rank of $i$ and let $\hat{r}_i$ be the predicted rank of $i$. We will call an oracle noisy if $\hat{r}_i \leq \varepsilon r +\delta$ for some constants $\varepsilon, \delta \geq 1$.

\begin{theorem}
\label{thm:approxoracle}
Using predictions from a noisy oracle, the learned Treap's performance matches that of a learned Treap with a perfect oracle up to an additive constant.
\end{theorem}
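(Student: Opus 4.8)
The plan is to bound the expected depth of $e_i$ under the noisy oracle and compare it directly to the perfect-oracle bound of $2H_i-1$ from \Cref{thm:EDepth}. The starting observation is exactly the one used there: by the heap invariant, only elements of strictly higher priority can be ancestors of $e_i$. Under the noisy oracle the priority of an element is its predicted rank, so the set of potential ancestors of $e_i$ is $S_i=\{e_k : \hat{r}_k \le \hat{r}_i\}$. Since the predicted ranks are distinct positive integers, at most $\hat{r}_i$ of them can be $\le \hat{r}_i$, so $|S_i|\le \hat{r}_i \le \varepsilon i + \delta$; that is, the noise can only inflate the potential-ancestor set by a constant multiplicative factor plus a constant.

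Next I would argue that the expected depth of $e_i$ is still controlled by $|S_i|$ in the same way as in the noiseless case. The key point is that the predicted ranks are a function of the true ranks (and the oracle's noise) only, and are therefore independent of the random permutation assigning keys to ranks. Consequently, conditioned on the index set $S_i$, the keys of its elements are in uniformly random relative order and $e_i$'s key is uniform among them; moreover $e_i$ has the largest predicted rank in $S_i$, hence the lowest priority, so it is inserted last. Because the lower-priority elements outside $S_i$ are inserted afterward and can only become descendants, the ancestors of $e_i$ in the full Treap coincide with its ancestors in the sub-Treap on $S_i$. Inserting the first $|S_i|-1$ elements in any fixed priority order with uniformly random keys yields a uniformly random BST, so the depth of $e_i$ obeys the same recurrence as in \Cref{thm:EDepth} with $i$ replaced by $|S_i|$, giving expected depth at most $2H_{|S_i|}-1 \le 2H_{\varepsilon i + \delta}-1$.

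The final comparison is then a routine harmonic-number estimate: the per-element overhead is
$$\bigl(2H_{\varepsilon i + \delta}-1\bigr)-\bigl(2H_i-1\bigr) = 2\bigl(H_{\varepsilon i + \delta}-H_i\bigr) \le 2\int_i^{\varepsilon i + \delta}\frac{dx}{x} = 2\ln\!\Bigl(\varepsilon + \tfrac{\delta}{i}\Bigr) \le 2\ln(\varepsilon+\delta),$$
which is an additive constant since $\varepsilon,\delta$ are constants. Averaging over the access distribution exactly as in \Cref{lem:ECost}, the expected per-access cost of the noisy learned Treap exceeds that of the perfect learned Treap by at most $\sum_{i=1}^n p_i\cdot 2\ln(\varepsilon+\delta)=2\ln(\varepsilon+\delta)=O(1)$.

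I expect the main obstacle to be the middle step: justifying that the depth of $e_i$ is still governed by $2H_{|S_i|}-1$ rather than something larger. Concretely, one must verify that restricting attention to $S_i$ does not bias the distribution of $e_i$'s key relative to the other elements of $S_i$ (this is where independence of the predictions from the key permutation is essential), and that the fact ``a fixed insertion order with uniformly random keys produces a uniformly random BST'' lets the recurrence of \Cref{thm:EDepth} be reused verbatim with parameter $|S_i|$. Once that reduction is in place, the bound $|S_i|\le \varepsilon i+\delta$ and the harmonic estimate are straightforward.
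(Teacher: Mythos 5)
Your proof is correct and follows essentially the same route as the paper: bound the noisy-oracle expected depth of $e_i$ by $2H_{\varepsilon i + \delta} - 1$ via \Cref{thm:EDepth}, then bound the per-access harmonic gap $2(H_{\varepsilon i+\delta}-H_i)$ by a constant depending on $\varepsilon$ and $\delta$. The only differences are that the paper asserts the noisy depth bound in a single line where you carefully justify the reduction to the potential-ancestor set $S_i$ (filling in a step the paper leaves implicit), and your integral estimate yields the slightly tighter constant $2\ln(\varepsilon+\delta)$ versus the paper's $2(1+\ln(\varepsilon+\delta))$.
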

\begin{proof}
The expected cost of a single access on the learned Treap with a noisy oracle is at most $\sum_{i = 1}^n p_i(2H_{\varepsilon i +\delta} - 1).$

The difference between the expected cost of a learned Treap with a noisy oracle and a learned Treap with a perfect oracle is 
$\sum_{i = 1}^n 2p_i(H_{\varepsilon i +\delta} - H_i).$

Using that $\ln(n) \leq H_n \leq \ln(n) + 1$ for the $n^{th}$ Harmonic number $H_n$, the difference is at most $\sum_{i = 1}^n 2p_i\left(1+\ln\left(\varepsilon+ \frac{\delta}{i}\right)\right) \leq \sum_{i = 1}^n 2p_i\left(1+\ln\left(\varepsilon+ \delta\right)\right) = 2\left(1+\ln\left(\varepsilon+ \delta\right)\right) \leq c,$ 
for some constant $c$. Therefore, under a noisy oracle, the learned Treap is at most an additive constant worse than a learned Treap with a perfect oracle.
\end{proof}
We remark that for frequency estimation oracles, it might be natural to consider an error bound of $\frac 1\Delta f_i \le \hat{f}_i \le \Delta f_i$ instead; however, if the underlying distribution is Zipfian, a frequency estimation error bound of $\frac 1\Delta f_i \le \hat{f}_i \le \Delta f_i$ is equivalent to a rank estimation error bound of $\hat{r}_i \in  r_i \pm \Delta^2$.

We will call an oracle inaccurate if there exist no constants $\varepsilon, \delta \geq 1$ such that $\hat{r}_i \leq \varepsilon r +\delta$. Further, we will define the notion of an adversarial oracle as an oracle that outputs a rank ordering that is adversarial; more specifically, given a distribution $\mathcal{D}$ with a random rank ordering, a non-adversarial oracle would output a random rank ordering that is not necessarily the same as the rank ordering of $\mathcal{D}$.

\begin{theorem}
A learned Treap based on an inaccurate but non-adversarial oracle has expected performance no worse than that of a random Treap, up to a small additive constant.
\end{theorem}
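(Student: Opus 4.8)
The plan is to exploit the fact that the shape of any Treap is determined solely by the relative order of its priorities, together with the non-adversarial hypothesis, which I read as saying that the oracle's predicted ranking is a uniformly random permutation of $[n]$ that is independent of the true frequency ranking. Under this reading the priority order driving the learned Treap is itself a uniformly random permutation---exactly as for a random Treap whose priorities are drawn i.i.d.\ from a continuous distribution---so the two constructions induce the \emph{same} distribution over tree shapes, and the task reduces to transferring this distributional identity to the expected access cost.

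First I would fix notation. Identify the keys with their sorted positions $1,\dots,n$, let $\rho$ be the (random, by the standing assumption) permutation giving the frequency rank of each key, so that key $k$ is accessed with probability $p_{\rho(k)}$, and let $\pi$ be the permutation of predicted ranks returned by the oracle, which fixes the priorities. The crucial point is that $\pi$ and $\rho$ are independent. Writing $\mathrm{depth}_\pi(k)$ for the depth of key $k$ in the Treap built from priority order $\pi$, the expected cost per access factors as
\begin{equation*}
\mathbb{E}[\mathrm{cost}] \;=\; \mathbb{E}_{\rho,\pi}\!\left[\sum_{k=1}^n p_{\rho(k)}\,\mathrm{depth}_\pi(k)\right] \;=\; \sum_{k=1}^n \mathbb{E}_\rho\!\left[p_{\rho(k)}\right]\,\mathbb{E}_\pi\!\left[\mathrm{depth}_\pi(k)\right],
\end{equation*}
where the last step uses the independence of $\pi$ and $\rho$.

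Next I would evaluate the two factors. Since $\rho(k)$ is uniform on $[n]$, we have $\mathbb{E}_\rho[p_{\rho(k)}]=\frac 1n\sum_{i=1}^n p_i=\frac 1n$; and since $\pi$ is a uniformly random priority order, key $k$ inherits the classical random-Treap expected depth $H_k+H_{n-k+1}-2$. Substituting yields $\mathbb{E}[\mathrm{cost}]=\frac 2n\sum_{k=1}^n H_k-1$, which is precisely the random-Treap access cost derived in \hyperref[thm:ERanCost]{Theorem \ref*{thm:ERanCost}}. The two expected costs therefore coincide, and in the idealized uniform case the promised additive constant can be taken to be $0$.

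The main obstacle is conceptual rather than computational: it is to make the informal notion of a non-adversarial oracle precise enough to justify both the independence of $\pi$ from $\rho$ and the uniformity of the induced priority order. Once these are granted, the result is a one-line consequence of the factorization above and \hyperref[thm:ERanCost]{Theorem \ref*{thm:ERanCost}}; the ``small additive constant'' in the statement is then slack that absorbs any mild departure of the oracle's output from the idealized uniform permutation (for instance discretization of priorities or negligible correlations with the frequencies), rather than an essential loss. Spelling out this robustness, so that the guarantee degrades gracefully instead of demanding perfect uniformity, is the part that would require the most care.
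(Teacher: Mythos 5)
Your proposal is correct under the reading you make explicit, but it takes a genuinely different route from the paper, and a sharper one. The paper's proof is a one-line upper bound: since the oracle's output is a random rank ordering, \hyperref[thm:EDepth]{Theorem \ref*{thm:EDepth}} still applies, so every element has expected depth at most $2H_n - 1$, hence the expected access cost is at most $\sum_i p_i(2H_n-1) = 2H_n - 1$; the ``small additive constant'' in the statement is then the gap between this upper bound and the random-Treap lower bound $2H_{n+1}-4$ of \hyperref[thm:ERanCost]{Theorem \ref*{thm:ERanCost}}, i.e., roughly $3$. You instead prove a distributional identity: a uniform, independent predicted ranking induces exactly the same distribution over tree shapes as i.i.d.\ continuous priorities, and the independence of $\pi$ and $\rho$ lets you factor the expected cost into $\sum_k \mathbb{E}_\rho[p_{\rho(k)}]\,\mathbb{E}_\pi[\mathrm{depth}_\pi(k)] = \frac1n\sum_k (H_k + H_{n-k+1} - 2)$, matching the computation inside \hyperref[thm:ERanCost]{Theorem \ref*{thm:ERanCost}} exactly, so your additive constant is $0$ rather than $3$. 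What each approach buys: the paper's bound is shorter and only needs that each element's expected depth obeys the learned-Treap bound (it never needs the exact classical depth formula $H_k + H_{n-k+1}-2$), at the price of a constant; yours is tight and, importantly, surfaces an assumption the paper uses tacitly --- the factorization step (and, in the paper's version, the step from ``expected depth of any element is at most $2H_n-1$'' to the weighted sum over access probabilities) requires the predicted ranking to remain uniform \emph{conditioned on} the true ranks, i.e., precisely the independence you isolate. Your closing caveat is also well placed: since the paper defines ``non-adversarial'' only informally, the independence-plus-uniformity reading is a formalization choice, and the robustness of the guarantee under mild correlations is addressed by neither proof; the paper's extra additive constant gives it a little slack there that your exact-equality version does not automatically have. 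One minor bookkeeping note: your final expression $\frac2n\sum_k H_k - 1$ reproduces the constant in the paper's own display in \hyperref[thm:ERanCost]{Theorem \ref*{thm:ERanCost}} (a direct evaluation of $\sum_k \frac1n(H_k+H_{n-k+1}-2)$ gives $\frac2n\sum_k H_k - 2$), so the discrepancy is inherited from the paper and does not affect either conclusion.
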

\begin{proof}
Since the oracle is non-adversarial, the expected depth of any element is still bounded by $2H_n - 1$ by \hyperref[thm:EDepth]{Theorem \ref*{thm:EDepth}}. Therefore, the expected cost is at most
$\sum_{i = 1}^n p_i(2H_n - 1) = 2H_n - 1.$ 
\end{proof}

\subsection{Oracles with Limited Capabilities}
\label{sec:lim}

In certain circumstances, it may be impossible or inconvenient to obtain an oracle that predicts the full rank ordering of the elements. Instead, it may be easier to obtain an oracle that predicts the top $k$ elements only.

In this case, we will assign the top $k$ elements random positive real-valued priorities and the remaining elements will be assigned random negative real-valued priorities. Hence, the top $k$ elements are ancestors of the remaining elements. Again, here, we will assume that the underlying rank ordering is a random permutation of $[n]$. Further, suppose that the top $k$ items account for $p$ percent of the queries.

\begin{theorem}
With an oracle that predicts only the top $k$ elements, the expected depth of an access is at most $2(pH_k + (1-p)H_n) - 1$.
\end{theorem}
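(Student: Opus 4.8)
The plan is to condition on whether the accessed element lies among the top $k$ (probability $p$) or among the remaining $n-k$ elements (probability $1-p$), and to bound the expected depth separately in each case. Note that the target quantity factors as $2(pH_k + (1-p)H_n)-1 = p(2H_k-1)+(1-p)(2H_n-1)$, so by the law of total expectation it suffices to show that the conditional expected depth is at most $2H_k-1$ when a top-$k$ element is queried and at most $2H_n-1$ when one of the rest is queried. Both bounds will follow from the same random--binary--search reduction used in \autoref{thm:EDepth}, applied to the set of elements that can possibly be ancestors of the queried element.

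First I would handle a top-$k$ element $e_j$. Since every top-$k$ element receives a positive priority and every other element a negative one, all ancestors of $e_j$ are themselves top-$k$ elements, so $e_j$ sits inside the treap induced on the $k$ top elements alone. Their priorities are independent random reals, so this induced treap is an ordinary random treap on $k$ keys; by the reduction of \autoref{thm:EDepth} (equivalently, the depth formula $H_t + H_{k-t+1}-2$ invoked in \autoref{thm:ERanCost}) the expected depth of any one of its elements is at most $2H_k-1$. Averaging over which top-$k$ element is queried then yields conditional expected depth at most $2H_k-1$.

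For a bottom element $e_j$, its candidate ancestors are exactly the elements of higher priority, namely all $k$ top elements together with the bottom elements whose random priority exceeds that of $e_j$; call this set $S$, and observe trivially that $|S| \le n-1$. The ancestors of $e_j$ are precisely the pivots encountered while searching for $e_j$ in the treap induced on $S \cup \{e_j\}$, so its depth equals $1$ plus the number of such pivots. I would argue that this search behaves like a uniform-pivot random binary search on $|S|+1 \le n$ keys: at each step the current pivot is the maximum-priority element of the current key-range $R$, and under the random-permutation assumption the top-$k$ members of $R$ form a uniformly random subset of $R$ while priorities within each tier are exchangeable, so the maximum-priority element of $R$ is uniformly distributed over $R$ whether or not $R$ contains any top-$k$ element. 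Feeding this into the recurrence of \autoref{thm:EDepth} with at most $n$ keys gives expected depth at most $2H_n-1$, and combining the two cases produces $p(2H_k-1)+(1-p)(2H_n-1)=2(pH_k+(1-p)H_n)-1$.

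The hard part is the uniform-pivot claim in the bottom case: one must check that the property ``the maximum-priority element is uniform over its range'' survives the conditioning that accumulates as the search descends, i.e. conditioning on the positions of earlier pivots and on which side of each pivot $e_j$ falls. I expect this to reduce to an exchangeability argument on the random permutation restricted to the positions not yet fixed. It is worth noting why one cannot avoid this: a naive split of the depth into top-$k$ ancestors plus in-gap ancestors, bounding each by its own maximum, fails because it ignores the trade-off by which a large $k$ forces small gaps and a small $k$ allows large ones; the random-binary-search view is exactly what lets the single inequality $|S|+1 \le n$ absorb this trade-off and deliver the clean $2H_n-1$ bound.
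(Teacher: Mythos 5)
Your proposal is correct and takes essentially the same approach as the paper, whose entire proof is the one-line version of your decomposition: condition on whether the query hits the top $k$ (probability $p$) or not, bound the conditional expected depths by $2H_k-1$ and $2H_n-1$ respectively, and combine via $p(2H_k-1)+(1-p)(2H_n-1)=2(pH_k+(1-p)H_n)-1$. The ``hard part'' you flag in the bottom case dissolves without any step-by-step conditioning argument: under the random-rank-ordering assumption, the two-tier assignment (a uniformly random $k$-subset of keys with i.i.d.\ positive priorities, the rest with i.i.d.\ negative priorities) induces a priority order on the keys that is distributionally identical to a uniform random permutation of all $n$ keys, so the whole structure is exactly a random treap on $n$ keys and the $2H_n-1$ bound is immediate --- which is what the paper's terse proof implicitly relies on.
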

\begin{proof}
For the top $k$ elements, the expected depth is at most $2H_k - 1$ and for the rest of the elements, the expected depth is at most $2H_n -1$. Therefore, the expected depth of an access is at most $2(pH_k + (1-p)H_n) - 1$.
\end{proof}

For small $k$ and significant $p$, this results in a large constant factor reduction in expected access depth.
Similarly, if we were given an oracle that can only accurately predict the frequencies of the top $k$ items, we can assign priorities of the top $k$ items to the frequency and assign random negative real-valued priorities to the remaining $n-k$ items.

\subsection{Removing Assumption of Random Rank Ordering}
\label{sec:var}

In real world datasets, it might not be the case that the rank ordering is a random permutation. For example, in search queries, certain queries are lexicographically close to misspelled versions of the query; however, misspelled versions of the query have a significantly reduced frequency compared to the correctly spelled query. Furthermore, it may be the case that the oracle is adversarial. In this case, we would want to remove the assumption that the rank ordering is a random permutation.

One natural idea is to map the identities of the elements to a random real number. For key $i$, we will use $s_i$ to denote this random real. The idea is to use a random Treap (or any other self-balancing binary search tree) and a learned Treap together. The random Treap will use the actual identity of the element as the key and the learned Treap will use the random real as the key. For each node in the learned Treap, we keep a pointer to the corresponding node in the random Treap. It immediately follows that the rank ordering on the keys of the learned Treap is equivalent to a random permutation. Furthermore, we keep a map that maps the identity of the element to its corresponding random real. We show an example of this modified learned treap in \autoref{badtikz}. 

\pgfdeclarelayer{background}
\pgfsetlayers{background,main}

\tikzstyle{random}=[circle,draw=black!100,fill=black!25,minimum size=22.5pt,inner sep=0pt]
\tikzstyle{learned} = [circle, draw=black!100, fill=black!0, minimum size=22.5pt, inner sep=0pt]
\tikzstyle{edge} = [draw,thick,-]
\tikzstyle{pointer} = [draw,thick,->,>=Latex,red!50]

\begin{figure}[ht]
\vskip 0.2in
\centering
\begin{tikzpicture}[scale=1.8, auto,swap]
    \foreach \pos/\name in {{(3.25,3)/5}, {(2.85,2.55)/2}, {(2.5,2.05)/1},{(3.15,2.05)/3}, {(3.65,2.55)/6}, {(3.3,1.55)/4}, {(3.9, 2.05)/7}}
        \node[random] (\name) at \pos {$\name$};
    \foreach \pos/\name in {{(1.05,3)/s_7}, {(0.65,2.55)/s_3}, {(0.3,2.05)/s_5},{(1.2,2.05)/s_2}, {(1.45,2.55)/s_4}, {(0,1.55)/s_1}, {(1.75, 2.05)/s_6}}
        \node[learned] (\name) at \pos {$\name$};    
    \foreach \source/ \dest in {5/2, 2/1, 2/3,5/6,6/7,3/4, s_7/s_3, s_3/s_5, s_1/s_5, s_7/s_4, s_4/s_6, s_2/s_4}
        \path[edge] (\source) -- (\dest);
    \begin{pgfonlayer}{background}
    \foreach \source/ \dest in {s_1/1, s_2/2,s_3/3,s_4/4,s_5/5,s_6/6,s_7/7}
        \path[pointer] (\source) -- (\dest);
    \end{pgfonlayer}
\end{tikzpicture}
\caption{An example of the learned Treap modification. White nodes form the learned Treap and grey nodes form the random Treap. The red arrows are the pointers from nodes in the learned Treap to the corresponding node in the random Treap. One possible assignment of $[s_1, \dots, s_7]$ for this Treap could be $[1,5,3,6,2,7,4]$.}
\label{badtikz}
\vskip -0.10in
\end{figure}
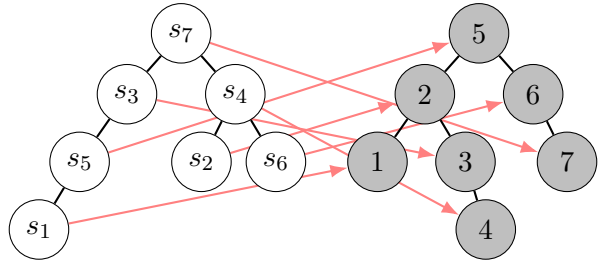

We describe each tree operation below:

\textbf{Access:} For an access operation to element $i$, we query $s_i$ in the learned Treap and use the pointer to access element $i$ in the random Treap.

\textbf{Insertion:} To insert element $i$, we generate $s_i$ and store $s_i$ in our map. Then we insert $i$ into the random Treap with a random priority and insert $s_i$ into the learned Treap with the learned priority. We set the pointer in the node containing $s_i$ to point to $i$.

\textbf{Deletions:} To delete element $i$, we delete $i$ from the random Treap, $s_i$ from the learned Treap, and remove $i$ and $s_i$ from the map.

\textbf{Successor/Predecessor:} To support successor and predecessor functionalities, we apply the same technique as described in \hyperref[sec:otherops]{Section \ref*{sec:otherops}} on the random Treap. 

Unfortunately, under this modification, there is no easy method of optimizing for range queries; however we note that this operation still takes at most $O(\log n)$ time in expectation because this is the expected sum of depths of the two nodes that we access. The main issue arises from the fact that range queries require access to the path from the root to the queried node on the random Treap; however, to remove the random rank ordering, we intentionally circumvent this path by traversing the learned Treap instead. For all accesses and successor/predecessor operations, we increase the cost of an operation by at most an additive constant related to accessing the map and a constant amount of pointer accesses. For insertions and deletions, we maintain the expected $O(\log n)$ bound since every node has expected depth at most $O(\log n)$. 

In practice, there might be a desire to avoid implementing a map; instead, using a hash function to implicitly store the map may be a more attractive alternative. We will show that using a $4$-wise independent hash function with range $(0, 1)$ would suffice. We choose to implement the hash function in $\mathrm{poly}(n)$ precision so that with high probability, there are no collisions and such a hash function requires $O(\log n)$ bits to store and only increases the cost of operations by at most an additive constant.

\begin{theorem}
\label{thm:limit_independence}
Given $s_{i} = h(e_i)$ where $h$ is drawn from a 4-wise universal hash family with range $(0, 1)$, the expected depth of $s_{i}$ is $O(\log i)$.
\end{theorem}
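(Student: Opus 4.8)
The plan is to reduce the depth of the node holding $s_i$ to a sum of ancestor-indicators and then control each indicator using only the limited independence of $h$. First I would recall the standard treap characterization: $e_j$ is an ancestor of $e_i$ exactly when $e_j$ has the highest priority — equivalently, the smallest rank — among all elements whose key lies between $s_i$ and $s_j$. Since priorities are set by rank and only $e_1,\dots,e_{i-1}$ outrank $e_i$, the only possible ancestors are $\{e_j : j<i\}$, and $e_j$ is an ancestor precisely when none of $s_1,\dots,s_{j-1}$ falls strictly between $s_i$ and $s_j$. Writing $I_j$ for the open interval with endpoints $s_i,s_j$ and $Z_j=\sum_{k<j}\mathbf{1}[s_k\in I_j]$, the depth equals $1+\sum_{j<i}\mathbf{1}[Z_j=0]$, so by linearity $\E[\mathrm{depth}(s_i)] = 1 + \sum_{j=1}^{i-1}\Pr[Z_j=0]$. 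In the fully independent case $\Pr[Z_j=0]=\frac{2}{j+1}$, which reproduces the $2H_i-1$ bound of \autoref{thm:EDepth}; the task is to show the sum is still $O(\log i)$ when $h$ is merely 4-wise independent.

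Second, I would bound $\Pr[Z_j=0]$ by a conditional second-moment argument, which is exactly what 4-wise independence supports. Conditioning on the two endpoints $s_i$ and $s_j$, any two of the remaining hashes together with $s_i,s_j$ form a 4-element set, so the variables $\{s_k\}_{k\neq i,j}$ are pairwise independent and uniform given $(s_i,s_j)$. Hence, writing $L=|s_i-s_j|$, the conditional mean is $\mu=(j-1)L$ and the conditional variance is $(j-1)L(1-L)\le\mu$, and Chebyshev gives $\Pr[Z_j=0\mid s_i,s_j]\le \mu^{-1}=\frac{1}{(j-1)L}$ (capped at $1$). Pairwise independence of $s_i,s_j$ also fixes the law of $L$: it has the triangular density $2(1-\ell)$ on $(0,1)$. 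Integrating, $\Pr[Z_j=0]\le \E\!\left[\min\!\left(1,\frac{1}{(j-1)L}\right)\right]$, which I would evaluate by splitting the integral at $L=\frac{1}{j-1}$.

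The main obstacle is the final accounting, and it is genuine rather than cosmetic. The integral above evaluates to $O\!\left(\frac{\log j}{j}\right)$, because in the regime $L\in(\frac1j,1)$ the Chebyshev bound $\frac{1}{(j-1)L}$ integrated against $\frac{d\ell}{\ell}$ contributes a logarithm; summing over $j$ then yields only $O(\log^2 i)$. Recovering the clean $O(\log i)$ requires squeezing more out of the hash family than a conditional Chebyshev bound: conditioning on both endpoints leaves only pairwise independence, for which $\Pr[Z_j=0]\approx \mu^{-1}$ is tight, so the stray logarithm is unavoidable along this route. To close the gap I would instead exploit the \emph{full} 4-wise structure through fixed (endpoint-free) dyadic intervals around the scale of $s_i$: the number of elements in such an interval is a sum of 4-wise independent indicators, so its fourth central moment is controlled, and a fourth-moment bound replaces the $\mu^{-1}$ emptiness estimate by $\mu^{-2}$, which kills the extra logarithm. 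The delicate part — and where I expect the real work to lie — is organizing this dyadic/scale decomposition so that it upper-bounds the ancestor count (the left and right ancestors form prefix-minima sequences along each side of $s_i$), while correctly handling the dependence between the random location of $s_i$ and the fixed intervals used to bound the counts at each scale.
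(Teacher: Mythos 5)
Your first half is sound and coincides with the paper's starting point: your characterization that $e_j$ ($j<i$) is an ancestor of $e_i$ exactly when none of $s_1,\dots,s_{j-1}$ lands strictly between $s_i$ and $s_j$ is precisely the paper's Fact~\ref{fact:ancestor}, and your conditional-Chebyshev computation (pairwise independence of the remaining hashes given the two endpoints, the triangular density for $L$, the resulting $\Pr[Z_j=0]=O(\log j/j)$ and the $O(\log^2 i)$ total) is correct. Your diagnosis is also correct: conditioning on both endpoints leaves only pairwise independence, for which the $\mu^{-1}$ emptiness bound is tight, so the stray logarithm cannot be removed along that route.

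The gap is in the proposed repair, which stops at exactly the step where the theorem is hard, and whose stated mechanism runs into a concrete obstruction. The fourth-moment bound $\Pr[Z=0]=O(\mu^{-2})$ is valid only for the count in a \emph{fixed} interval, but your dyadic cells must be anchored near the random points $s_i$ (and implicitly $s_j$), and each locator indicator you multiply in consumes independence budget: $\E\bigl[\mathbf{1}\{s_i\in I\}(Z_D-\mu)^4\bigr]$ already involves five distinct hash values, and with a locator for $s_j$ six, neither of which a $4$-wise independent family can factor; equivalently, conditioning on $s_i$ alone leaves only $3$-wise independence, which does not control a fourth central moment. What does factor is the second-moment version, i.e.\ Chebyshev again, which reinstates the $\log^2 i$. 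The paper closes this gap by a different decomposition: it partitions the candidate ancestors into rank-dyadic blocks $B_k=[2^{k-1}]$ and $A_k=([2^k]\cap[i])\setminus[2^{k-1}]$, converts the two-sided condition ``$s_{j'}$ outside $(s_j,s_i)$ for all $j'\in B_k$'' into the one-sided condition $H(j)>\max_{b\in B_k}H(b)$ via the wrap-around hash $H(x)=h(x)$ if $h(x)<s_i$ and $h(x)-1$ otherwise (which preserves $4$-wise independence), and then invokes Lemma~4 of Knudsen and St\"{o}ckel \cite{quicksort}: for disjoint $A,B$ with $|A|\le|B|$, the expected number of $a\in A$ with $h(a)$ beyond the extreme of $B$ is $O(1)$, giving $O(1)$ expected ancestors per block and $O(\log i)$ overall. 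That lemma is where the real work lives; completing your value-space dyadic plan would essentially amount to re-proving it, so as written your proposal does not yet yield $O(\log i)$.
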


To achieve this, the following observation is crucial.
\begin{fact}
\label{fact:ancestor}
Suppose that $s_j$ is an ancestor of $s_i$ where $j < i$. Then in the ordering of $\{s_i | x \in \{1, \dots, j, i\}\}$, $s_i$ and $s_j$ are adjacent.
\end{fact}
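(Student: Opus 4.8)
The plan is to reduce the statement to the standard ancestor/LCA characterization for treaps and then argue by contradiction. Recall that in the learned Treap of this section the binary-search-tree keys are the hash values $s_1, \dots, s_n$, while the heap priorities are induced by the ranks, with a smaller rank index meaning higher priority (so the rank-$1$ element sits at the root). The classical treap fact I would invoke is: element $u$ is an ancestor of element $v$ if and only if, among all elements whose $s$-value lies in the closed key-interval between $s_u$ and $s_v$, element $u$ has the highest priority, equivalently the smallest rank index. This is precisely the property underlying the pivot analysis in \hyperref[thm:EDepth]{Theorem \ref*{thm:EDepth}} (the ancestors of a node are exactly the successive ``pivots'' separating it from the current search interval), so I would either cite it or re-derive it in one line from the recursive treap construction, using that the highest-priority element in any key-interval is the one that first separates its endpoints.

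Applying this to our situation, since $s_j$ is an ancestor of $s_i$, element $j$ must have the smallest rank index among every element $x$ with $s_x \in [\min(s_i, s_j), \max(s_i, s_j)]$. I would then suppose, toward a contradiction, that $s_j$ and $s_i$ are \emph{not} adjacent in the sorted order of $\{s_x : x \in \{1, \dots, j, i\}\}$. Since the two values being compared are $s_j$ and $s_i$ themselves, non-adjacency forces some other member of this set to lie strictly between $s_j$ and $s_i$; because $j < i$, the index $i$ is excluded, so this member is some $s_x$ with $x \in \{1, \dots, j-1\}$.

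But such an $x$ satisfies $x < j$, hence has strictly higher priority than $j$, and its key $s_x$ lies in $[\min(s_i, s_j), \max(s_i, s_j)]$. This contradicts the ancestor characterization, which required $j$ to be the highest-priority element in that interval. Therefore no element of $\{s_1, \dots, s_{j-1}\}$ separates $s_j$ and $s_i$, and they are adjacent in the sorted order of $\{s_x : x \in \{1, \dots, j, i\}\}$, as claimed. Distinctness of the hash values, which holds with high probability by the $\mathrm{poly}(n)$-precision choice, lets me treat ``strictly between'' unambiguously.

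I expect the only real subtlety to be stating the ancestor characterization correctly and ensuring the key-interval is taken in $s$-value order rather than in rank order; once that is pinned down, the contradiction is immediate. The payoff, which I would flag for the subsequent proof of \hyperref[thm:limit_independence]{Theorem \ref*{thm:limit_independence}}, is that ancestry of $s_j$ over $s_i$ now implies a purely order-theoretic event about the relative positions of $s_j$, $s_i$ among $s_1, \dots, s_{j-1}$, whose probability can be controlled by a Bonferroni-type bound that only inspects a bounded number of the $s_x$ at a time and is therefore accessible under $4$-wise independence of $h$.
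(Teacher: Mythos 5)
Your proof is correct, and since the paper states Fact~\ref{fact:ancestor} as an observation without any written proof, your derivation via the standard treap ancestor characterization---$s_j$ is an ancestor of $s_i$ if and only if $j$ has the highest priority (smallest rank index) among all elements whose key lies in the closed interval between $s_j$ and $s_i$, so any $s_x$ with $x < j$ strictly between them would block the ancestry---is exactly the implicit argument the paper relies on. It is also precisely the order-theoretic form in which the fact is consumed in the proof of Theorem~\ref{thm:limit_independence}, so nothing is missing.
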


\begin{proof}[Proof of Theorem~\ref{thm:limit_independence}]
Since the priorities of each key do not change, only elements in $\{e_1, \dots, e_{i-1}\}$ can potentially be ancestors of $e_i$. We proceed with an analysis similar to Knudsen and St\"{o}ckel's (\cite{quicksort}) analysis of quicksort under limited independence.

From Lemma 4 of \cite{quicksort}, we have the following lemma: given hash function $h: X \xrightarrow[]{} (0, 1)$ drawn from a $4$-universal hash family and disjoint sets $A,B \subseteq X$ with $|A| \leq |B|$, then $$\mathbb{E}[|\{a \in A | h(a) \leq \min_{b \in B} h(b)\}|] = O(1) \;. $$ Similarly, $\mathbb{E}[|\{a \in A | h(a) \geq \max_{b \in B} h(b)\}|] = O(1)$.

Consider the set $S_j = \{s_j | 1 \leq j \leq i-1\}$. From Fact~\ref{fact:ancestor} we get that if $s_j$ is an ancestor of $s_i$ for some $j < i$ , then for all $j' < j$, $s_{j'} < \min\{s_i, s_j\}$ or $s_{j'} > \max\{s_i, s_j\}$.

% Consider the set $S = \{s_j | 1 \leq j \leq i-1\}$. Observe that if $s_j$ is an ancestor of $s_i$ for some $j < i$ and $s_j < s_i$, then for all $j' < j$, $s_{j'} < s_j$. Let $S_<$ be the set of $s_j$'s where $j<i$, $s_j < s_i$ and for all $j' < j$, $S_{j'} < s_j$ and let $S_>$ be the set of $s_j$'s where $j<i$, $s_j > s_i$ and for all $j' < j$, $S_{j'} > s_j$. The ancestors of $s_i$ is precisely $S_< \cup S_>$. Without loss of generality, we will provide a bound for $|S_>|$; the bound for $|S_<|$ is identical.

% Assume further without loss of generality that forall $j < i$, $s_j > s_i$. It will be apparent that this assumption can be removed with ease. 
For $k = 1, 2, ..., \log i$, define $$B_k = [2^{k-1}] \text{ and } A_k = \left([2^k]\cap [i]\right)/[2^{k-1}] \; .$$

% and
% \[
% B_{k, 1} = \{\ell, s_\ell < s_i, \ell \in B_k\},  B_{k, 2} = \{\ell, s_\ell > s_i, \ell \in B_k\}
% \]' < for each $j^' \in B_k$,
Suppose that $s_j$ is an ancestor of $s_i$ for some $j \in A_k$. Without loss of generality, we assume that $s_j < s_i$. Then we have that for each $j' \in B_k$, $s_{j'} < s_j$ or $s_{j'} > s_i$. Consider the hash function $H: X \xrightarrow[]{} (-(1-s_i), s_i)$ such that $H(x) = h(x)$ if $h(x) < s_i$ and $H(x) = h(x) - 1$ if $h(x) > s_i$. Notice that $H$ is also a $4$-wise independent hash function. This implies that  $H(j) > \max_{b \in B_k}H(b)$.
From the lemma above, there are an expected $O(1)$ such elements in $A_k$ and since there are only $O(\log i)$ values of $k$ for which $A_k$ is non-empty,
% in expectation, $|S_>| = O(\log i)$ by , 
it follows immediately by linearity of expectation that the expected number of ancestors of $s_i$ is $O(\log i)$ and thus, the expected depth of $e_i$ is $O(\log i)$.
\end{proof}
\section{Experiments}

In this section, we present experimental results that compare the performance of our learned Treap to classical self-balancing binary search tree data structures. Specifically, we examined Red-Black Trees, AVL Trees, Splay Trees, B-Trees of order $3$, and random Treaps. For binary search trees sensitive to insertion order, we insert all keys in a random order. For these experiments, we only consider query operations and report the total number of comparisons made by each data structure. We note that although the number of comparisons is not a precise measurement of actual runtime, with the exception of Splay Trees, traversing the tree is extremely similar across all data structures, and for all data structures tested except B-Trees, the number of comparisons is exactly the access depth. For Splay Trees, we can expect a constant factor more in actual runtime due to the rotations involved. 

\subsection{Synthetic Datasets}

We consider synthetic datasets where elements appear according to a Zipfian distribution with parameter $\alpha$. As with \autoref{sec:analysis}, we assume that the rank order of the elements is a random permutation. For each experiment, we consider a sequence of length $10^5$.
% \begin{figure}[!ht]
% \vskip 0.2in
% \begin{center}
% \centerline{\includegraphics[width=\columnwidth]{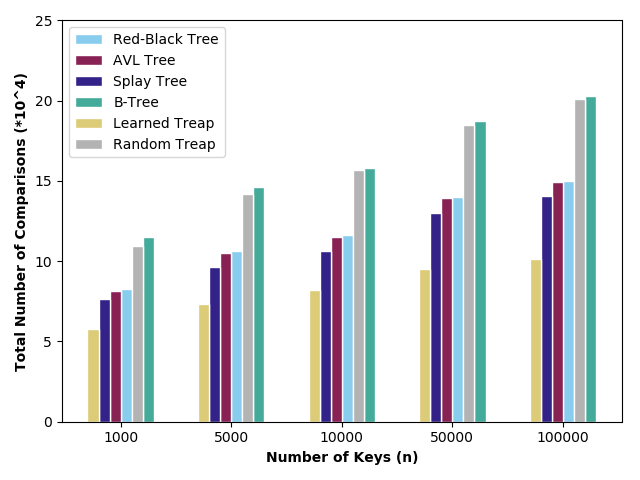}}

% \caption{Total number of comparisons of classical binary search tree data structures and the learned Treap on the Zipfian Distribution with parameter $\alpha = 1$}
% \label{alpha1}
% \end{center}
% \vskip -0.35in
% \end{figure}

% \begin{figure}[!ht]
% \vskip 0.2in
% \begin{center}
% \centerline{\includegraphics[width=\columnwidth]{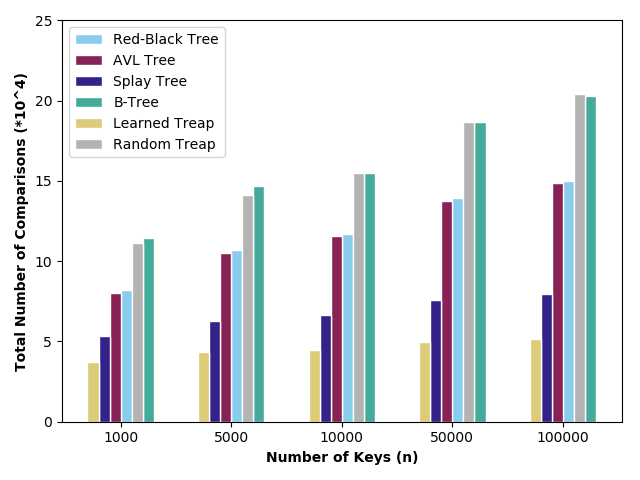}}

% \caption{Total number of comparisons of classical binary search tree data structures and the learned Treap on the Zipfian Distribution with parameter $\alpha = 1.25$.}
% \label{alpha125}
% \end{center}
% \vskip -0.35in
% \end{figure}
% \begin{figure}[!htp]
% \vskip 0.2in
% \begin{center}
% \centerline{\includegraphics[width=\columnwidth]{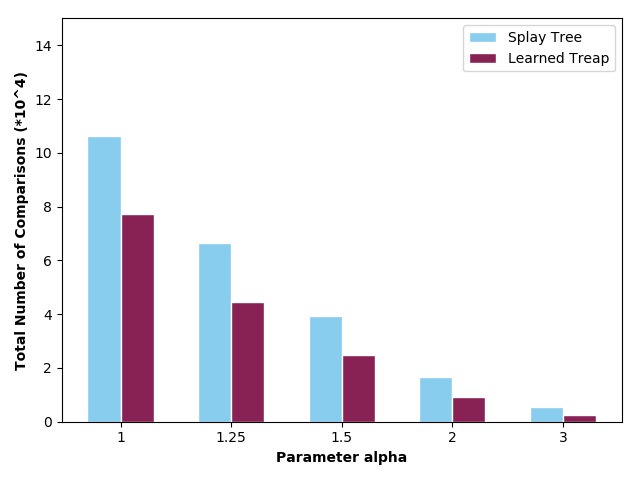}}

% \caption{Total number of comparisons of Splay Tree and learned Treaps for varying Zipfian parameter $\alpha$.}
% \label{varyalpha}
% \end{center}
% \vskip -0.35in
% \end{figure}

\begin{figure*}
    \centering
    \begin{minipage}[t]{.31\textwidth}
    \includegraphics[width=\columnwidth]{Graphics/res_syn_testn_alpha1_sorted.png}
    \vspace{-21pt}
    \caption{Total number of comparisons of classical binary search tree data structures and the learned Treap on the Zipfian Distribution with parameter $\alpha = 1$}
    \label{alpha1}
    \end{minipage}
    \hfil
    \begin{minipage}[t]{.31\textwidth}
    \includegraphics[width=\columnwidth]{Graphics/res_syn_testn_alpha125_sorted.png}
    \vspace{-21pt}
    \caption{Total number of comparisons of classical binary search tree data structures and the learned Treap on the Zipfian Distribution with parameter $\alpha = 1.25$.}
    \label{alpha125}
    \end{minipage}
    \hfil
    \begin{minipage}[t]{.31\textwidth}
    \includegraphics[width=\columnwidth]{Graphics/res_syn_testalpha.png}
    \vspace{-21pt}
    \caption{Total number of comparisons of Splay Tree and learned Treaps for varying Zipfian parameter $\alpha$.}
    \label{varyalpha}
    \end{minipage}
\end{figure*}

We report experimental results where we vary $n$, the number of keys, for $\alpha =1$ in \autoref{alpha1} and for $\alpha = 1.25$ in \autoref{alpha125}.

Notice that for both $\alpha = 1$ and $\alpha = 1.25$, the learned Treap performs approximately $25\%$ better than Splay Trees and a bit over $30\%$ better than AVL and Red-Black Trees in terms of the number of comparisons. For $\alpha = 1$, the factor-$2$ savings shown in \hyperref[col:2fac]{Corollary \ref*{col:2fac}} is exhibited and for $\alpha = 1.25$, we can see that the cost of an access is constant, as shown in \hyperref[lem:zipfalphageq1]{Lemma \ref*{lem:zipfalphageq1}}.

In \autoref{varyalpha}, we show the effects of varying $\alpha$; in this set of experiments, we fix the number of keys to be $10^4$ and only show results for the statically optimal trees, as in Splay Trees and learned Treaps. The learned Treap performs between approximately $27\%-51\%$ better than the Splay Tree. The greatest improvement was at $\alpha = 3$ and the least improvement was observed when $\alpha = 1$.

\subsection{Real World Datasets}

In this section, we used machine learning models trained by~\cite{hsu} as our frequency estimation oracle. We present 4 versions of our learned Treap. We consider the performance of our learned Treap with the trained frequency estimation oracle and with a perfect oracle; for both of these instances, we also test the performance if we remapped the keys to a random permutation (i.e., similar to the idea of \hyperref[sec:var]{Section \ref*{sec:var}}). We call the remapped versions of the learned Treap ``shuffled". To make the data more presentable, among classical binary search tree data structures, we only show the results of Red-Black Trees and Treaps; we remark that the relative performance of all classical binary search tree data structures in these datasets was similar to that in the synthetic datasets. %The detail

% \begin{figure}[ht]
% \vskip 0.2in
% \begin{center}
% \centerline{\includegraphics[width=0.7\columnwidth]{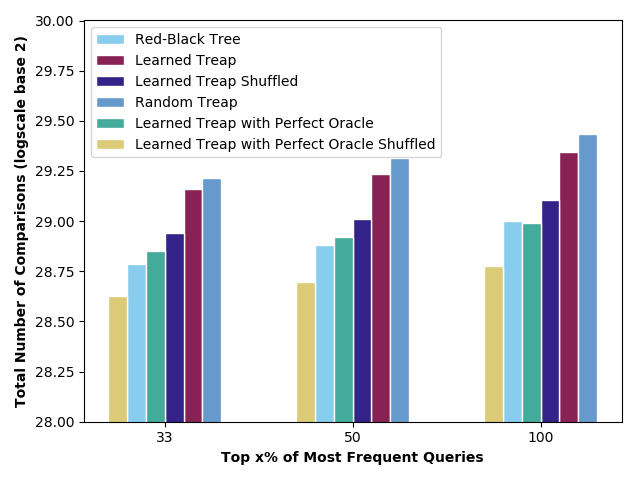}}
% \caption{Total number of comparisons of Red-Black Trees, random Treaps, and learned Treaps on the $20^{th}$ test minute}
% \label{Internet}
% \end{center}
% \vskip -0.35in
% \end{figure}
% \begin{figure}[ht]
% \vskip 0.2in
% \begin{center}
% \centerline{\includegraphics[width=0.7\columnwidth]{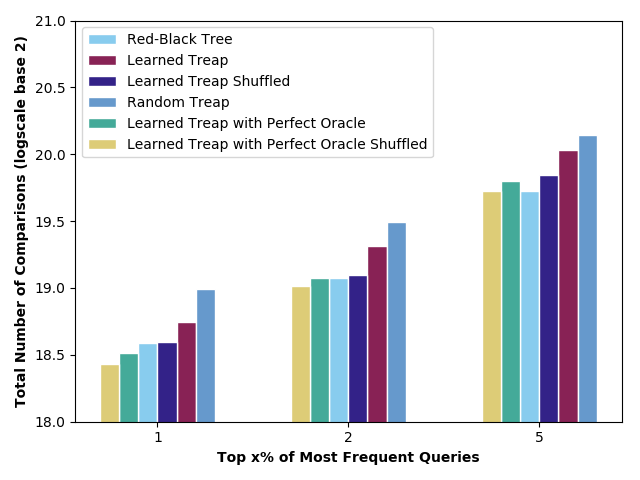}}
% \caption{Total number of comparisons of Red-Black Trees, random Treaps, and learned Treaps on the $50^{th}$ day.}
% \label{aol}
% \end{center}
% \vskip -0.35in
% \end{figure}

\begin{figure}[!h]
    \centering
    \begin{minipage}[t]{.45\textwidth}
    \includegraphics[width=\columnwidth]{Graphics/res_caida_20_trunc_sorted.png}
    \vspace{-10pt}
    \caption{Total number of comparisons of Red-Black Trees, random Treaps, and learned Treaps on the $20^{th}$ test minute}
    \label{Internet}
    \end{minipage}
    \begin{minipage}[t]{.45\textwidth}
    \includegraphics[width=\columnwidth]{Graphics/res_aol_50_trunc_sorted.png}
    \vspace{-10pt}
    \caption{Total number of comparisons of Red-Black Trees, random Treaps, and learned Treaps on the $50^{th}$ day.}
\label{aol}
    \end{minipage}
\end{figure}

\subsubsection{Internet Traffic Data}

Various forms of self-balancing binary search trees and skip lists have been suggested to be used in routing tables \cite{unix}. In this experiment, we measure the performance of the binary search trees if we had to query every packet in the internet traffic logs.

\textbf{Data:} The internet traffic data was collected by CAIDA using a commercial backbone link (Tier 1 ISP) \cite{CAIDA}. Following~\cite{hsu}, we used the internet traffic recorded from Chicago outgoing to Seattle recorded on 2016-01-21 13:00-14:00 UTC. Each minute recorded approximately 30 million and 1 million unique flows. 

\textbf{Model:} We used the prediction made by~\cite{hsu}. In their paper, an RNN was used to encode the source and destination IP addresses, ports, and protocol type, and a separate RNN was used to predict the number of packets from the traffic flow based on the encoding. The first 7 minutes of the dataset was used as training sets with the next 2 minutes used as the validation sets. The rest of the dataset was used for testing. See Hsu et al. \cite{hsu} for details.

\textbf{Results:} In \autoref{Internet}, we plot the performances of the various data structures. We consider three variants of the dataset: a subset with the top 33\% of the most frequent queries, a subset of the top 50\% of the most frequent queries, and the full dataset. We show the results on the $20^{th}$ test minute (2016-01-21 13:29 UTC). 

% \begin{figure}[ht]
% \vskip 0.2in
% \begin{center}
% \centerline{\includegraphics[width=0.6\columnwidth]{Graphics/res_caida_20_trunc_sorted.png}}
% \caption{Total number of comparisons of Red-Black Trees, random Treaps, and learned Treaps on the $20^{th}$ test minute}
% \label{Internet}
% \end{center}
% \vskip -0.35in
% \end{figure}

In all cases, the shuffled versions of the learned Treap perform significantly better than that of the non-shuffled versions, and the learned Treaps perform better than random Treaps. We note that using the oracle from~\cite{hsu}, we are unable to beat Red-Black Trees; however, the shuffled learned Treap with the learned oracle is comparable and with a better oracle, it could be possible to outperform a Red-Black Tree. %Perhaps one could achieve such an improvement with the improved oracle from Du et al. \yrcite{betteroracle}.

\subsubsection{Search Query Data}

\textbf{Data:} This dataset contains approximately 21 million queries on AOL collected over 90 days in 2006. The distribution follows Zipf's Law (see Hsu et al. \cite{hsu}).

\textbf{Model:} Again, we use the predictions from ~\cite{hsu}. They use an RNN with LSTM cells to encode the queries character by character. The encoding is then fed into a fully connected layer to predict the frequency of each query. The first 5 days were used for training while the $6^{th}$ day was used as the validation set.

\textbf{Results:} As with the Internet traffic dataset, we show the performance of the learned Treaps, a Red-Black Tree, and a random Treap in \autoref{aol}. For this dataset, we consider the top 1\%, 2\%, and 5\% of the most frequent queries as our set of keys. We show the results for the $50^{th}$ day.
% \begin{figure}[ht]
% \vskip 0.2in
% \begin{center}
% \centerline{\includegraphics[width=\columnwidth]{Graphics/res_aol_50_trunc_sorted.png}}
% \caption{Total number of comparisons of Red-Black Trees, random Treaps, and learned Treaps on the $50^{th}$ day.}
% \label{aol}
% \end{center}
% \vskip -0.35in
% \end{figure}

Similar to the internet traffic dataset, the shuffled version of the learned Treaps performed better and all learned Treaps performed better than the random Treap. For this dataset, the shuffled learned Treap with the frequency estimator from~\cite{hsu} performed well and is comparable to the performance of a Red-Black Tree. Furthermore, unlike the internet traffic dataset, the performance of the learned Treaps with the machine learning model was close to that of the learned Treap with a perfect oracle.

\begin{figure*}[!h]
    \centering
    \includegraphics[width=0.32\columnwidth]{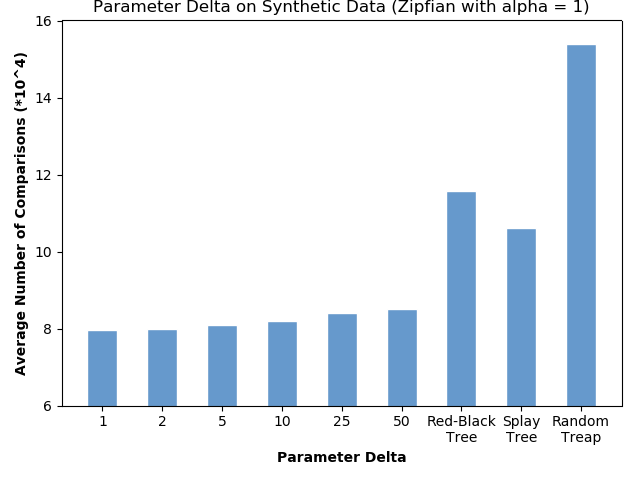}
    % \hfil
    \includegraphics[width=0.32\columnwidth]{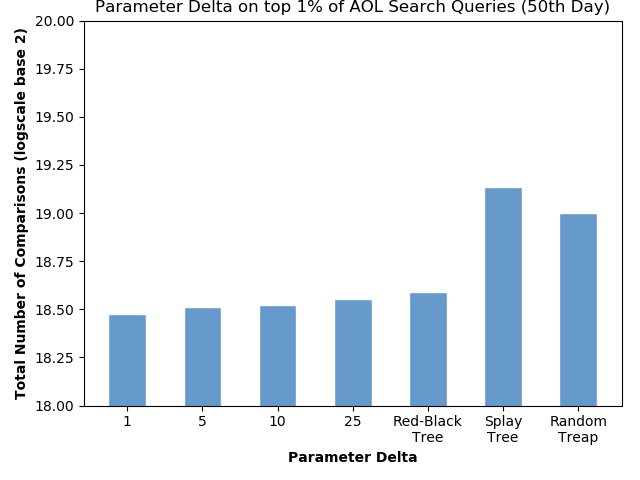}
    % \hfil
    \includegraphics[width=0.32\columnwidth]{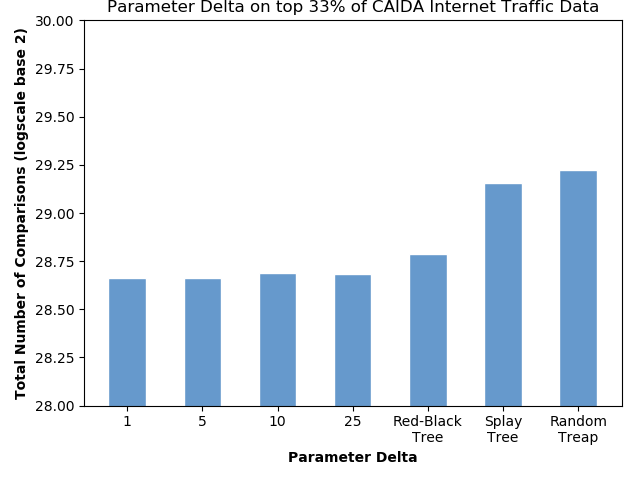}
    \vspace{-10pt}
    \caption{Performance of learned Treap under oracles with different errors}
    \label{fig:noisy_oracles}
\end{figure*}

\subsection{Performance under Oracles with Different Errors}

In this section, we study the performance of the learned Treap under oracles with certain errors on both synthetic and real-world data. %For frequency estimation oracles, it might be natural to consider an error bound of $\hat{f}_i \le \Delta f_i$ instead. 
In \autoref{fig:noisy_oracles} we show experimental results on synthetic and real-world data that show a graceful degradation as error grows. Here the prediction, $\hat{f}_i$, of the frequency satisfies $\hat{f}_i \le \Delta f_i$. 
We note that if the underlying distribution is Zipfian, then the error bounds for the rank-estimation oracle are stronger than the bounds for a frequency estimation oracle; if a given  frequency estimation oracle has the error bound of $\frac 1\Delta f_i \le \hat{f}_i \le \Delta f_i$, then under a Zipfian distribution with $\alpha \geq 1$, then $\hat{r}_i \in  r_i \pm \Delta^2$.

\section{Conclusion}

We introduced the concept of learning-augmented algorithms into the class of binary search tree data structures that support additional operations beyond B-trees. The learned Treap is able to support various useful tree-based operations, such as range-queries, successor/predecessor, and order statistic queries and can be optimized for such operations. We proved that the learned Treap is robust under rank-estimation oracles with reasonable error and under modifications, is no worse than a random Treap regardless of the accuracy of the oracle and the underlying input distribution. Further, we presented experimental evidence that suggests a learned Treap may be useful in practice. In the future, it may be interesting to explore whether advanced tree data structures, such as van Emde Boas Trees or Biased Skip Lists, can also benefit from machine learning techniques.

\bibliographystyle{alpha}
\bibliography{references}

\newpage

\appendix
\onecolumn

\end{document}